\newtheorem{Lemma}{Lemma}
\newtheorem{Theorem}{Theorem}
\def\ci{\perp\!\!\!\perp}
\renewcommand{\d}[1]{\ensuremath{\operatorname{d}\!{#1}}}
\begin{document}
\author{Reza~Parseh,~\IEEEmembership{Student Member, IEEE,}~and Kimmo Kansanen,~\IEEEmembership{Member, IEEE}\thanks{© 2014 IEEE. Personal use of this material is permitted. Permission from IEEE must be obtained for all other uses, in any current or future media, including reprinting/republishing this material for advertising or promotional purposes, creating new collective works, for resale or redistribution to servers or lists, or reuse of any copyrighted component of this work in other works.

This article has been accepted for publication in IEEE Trans. Signal Process. DOI 10.1109/TSP.2014.2360820 IEEE Transactions on Signal Processing. 

The authors are with the Department of Electronics and Telecommunications, Norwegian University of Science and Technology, 7034, Trondheim, Norway (email: reza.parseh@iet.ntnu.no; kimmo.kansanen@iet.ntnu.no).}}

\title{On Estimation Error Outage for Scalar Gauss-Markov Signals Sent Over Fading Channels}
\maketitle

\begin{abstract}
Measurements of a scalar linear Gauss-Markov process are sent over a fading channel. The fading channel is modeled as independent and identically distributed random variables with known realization at the receiver. The optimal estimator at the receiver is the Kalman filter. In contrast to the classical Kalman filter theory, given a random channel, the Kalman gain and the error covariance become random. Then the probability distribution function of expected estimation error and its outage probability can be chosen for estimation quality assessment. In this paper and in order to get the estimation error outage, we provide means to characterize the stationary probability density function of the random expected estimation error. Furthermore and for the particular case of the i.i.d. Rayleigh fading channels, upper and lower bounds for the outage probability are derived which provide insight and simpler means for design purposes. We also show that the bounds are tight for the high SNR regime, and that the outage probability decreases linearly with the inverse of the average channel SNR. 
\end{abstract}
\begin {IEEEkeywords} Estimation Over Fading Channels, Kalman Filter, Outage Probability, Uncoded Transmission
\end{IEEEkeywords}
%%%%%%%%%%%%%%%%%%%%%%%%%%%%%%%%%%%%%%%%
%%%%%%%%%%%%%%%%%%%%%%%%%%%%%%%%%%%%%%%%
%%%%%%%%%%%%%%%%%%%%%%%%%%%%%%%%%%%%%%%%
%%%%%%%%%%%%%%%%%%%%%%%%%%%%%%%%%%%%%%%%
%%%%%%%%%%%%%%%%%%%%%%%%%%%%%%%%%%%%%%%%
%%%%%%%%%%%%%%%%%%%%%%%%%%%%%%%%%%%%%%%%
\section{Introduction}\label{Sec_I} 
Low or zero delay transmission of measurements of a dynamic system to a remote controller/observer is important in applications such as network monitoring and control, wireless sensor networks, and generally in real-time signal processing when the observed signals should be sent over a communication channel. Due to tight delay conditions in many cases, high-performance block-wise coding schemes which incur unacceptable delay should be avoided. For wireless fading channels, it is possible to send the measurements directly over the channel using uncoded transmission and then perform estimation on the channel outputs at the receiver. Analysis of the signal estimation quality is therefore necessary to ensure satisfactory performance for such applications.    

The literature for network communication and control and wireless sensor networks is diverse and rich (see \cite{zhang2001stability, lihua2011control, antsaklis2007special,Stankovic:2008hk} and the references within). For various applications where the dynamic system follows a Gauss-Markov model and the channel realization is independent of the randomness of the dynamic system, the optimal estimator is the Kalman filter (\cite{bougerol1993kalman,shi2009kalman, Shi:2012en, rohr2011kalman, you2011mean, quevedo2012kalman, Subramanian:2005ca, Zhu:2009dg}). Due to the randomness of the fading channel, the Kalman filter is random and does not necessarily converge to a constant value. The instantaneous estimation error covariance is random as well. The estimation error covariance matrix is related to the prediction error covariance matrix with a simple transform. The prediction error covariance matrix propagates through a Riccati equation studied extensively in the literature. With the channel matrices being random, the prediction error covariance matrices then constitute a well-known stochastic process referred to as the random Riccati equation (RRE)\cite{wonham1968matrix}. While we focus on the estimation error quality as we are interested in the signal reconstruction, others especially in the control literature have focused on the prediction error covariance matrix because it is used directly in the controller design in many cases. We review some of those works in the following.  

In \cite{wang1999stability}, stability of RRE is studied and it is shown that under mild assumptions on the random observability Gramian matrix, it is both $L_r$ and exponentially stable. In \cite{xie2008stability}, the peak covariance stability of the RRE resulting from Kalman filtering with random observation losses is studied. Boundedness of the covariance matrix in the usual sense is also considered in the same work. In \cite{Moayedi:2010tV}, an adaptive filtering scheme based on the Riccati equation is proposed for state estimation in network control systems subject to delays, packet drops and missing measurements. In \cite{kar2012kalman}, it was shown that sequence of random covariance matrices converges in probability when observations are sent over a packet erasure channel where the erasure event is a Bernoulli i.i.d process. The stationary distributions for infinitely large random matrices with good approximations for dimensions around 10-20  were also found in \cite{vakili2008stieltjes} and \cite{vakili2008riccati} for two classes of random Riccati and Lyapunov equations. Also in \cite{Dey2009kalman} bounds on the mean of the instantaneous covariance matrices in the RRE formulation are obtained. 

In this paper, we study the case when measurements of a scalar Gauss-Markov process are sent over a fading channel with i.i.d. channel realizations. This model best suits e.g. low-cost sensor networks with processing at the fusion center. The samples are sent over the channel using continuous-valued uncoded (also known as analog \cite{caire2007distortion}) transmission after they are obtained, to avoid processing at the sensor or block coding and the consequent delays. It is assumed that the full channel knowledge is available at the receiver at the time of the observation. The optimum MMSE filter, i.e. the Kalman filter is then random and the exact value of the instantaneous estimation error variance (IEV) cannot be obtained in advance. For that reason, we are in particular interested in statistical characterization of the resulting estimation error. 

In the spirit of outage in fading channels, we utilize estimation error outage as a criterion for estimation performance assessment. A similar property, namely distortion outage was proposed in \cite{peng2010distortion} for MIMO block fading channels from an information theoretical point of view. There, it is mentioned that as for our case, distortion outage measures are useful when delay is of concern. Outage is defined as the event where the IEV exceeds a certain threshold. From a more practical viewpoint, this measure could be used as a design parameter for a control or monitoring system which observes the process. While \cite{peng2010distortion} finds expressions for the distortion outage diversity order and proves achievability, we are interested in the practical case of the Kalman filter and its behavior. We try to find the estimation error outage probability and find out how it is related to average channel SNR under certain channel statistics. We show for instance that in the scalar case and for the i.i.d. Rayleigh fading channel, the outage measure takes on a simple form in the high SNR regime, which we believe is insightful for design purposes and further development. 

In the rest of the paper and after introducing the system model in details, we first show that for \textit{any} i.i.d. fading channel, the first order probability density function (pdf) may be obtained through a recursive integral equation. We then select the case of Rayleigh fading channel and provide upper and lower bounds for the outage probability. Next, we show that the bounds are tight for the high SNR regime. Finally, we show that the outage probability decreases linearly with inverse of the channel SNR in the high SNR regime. This could for instance be used a rule of thumb method for estimation quality assessment under settings discussed in this paper. The high SNR analysis enables us to perform diversity analysis for the Kalman estimator as well. A shorter version of this work without high SNR analysis appears in \cite{parseh2014dist}. The current work also provides a different and more straightforward proof for the integral equation characterizing the pdf of the IEV. 
%%%%%%%%%%%%%%%%%%%%%%%%%%%%%%%%%%%%%%%%
%%%%%%%%%%%%%%%%%%%%%%%%%%%%%%%%%%%%%%%%
%%%%%%%%%%%%%%%%%%%%%%%%%%%%%%%%%%%%%%%%
%%%%%%%%%%%%%%%%%%%%%%%%%%%%%%%%%%%%%%%%
%%%%%%%%%%%%%%%%%%%%%%%%%%%%%%%%%%%%%%%%
%%%%%%%%%%%%%%%%%%%%%%%%%%%%%%%%%%%%%%%%
\section{System Model and Problem Definition}\label{Sec_II}
Consider the following scalar complex Gauss-Markov signal model 
\begin{align}
\nonumber x(n) & = \rho x(n-1) + u(n), \, n\geq 1,\, x(0)\sim\mathcal{CN}(0,M(0))\\
y(n) & = h(n) x(n) + v(n),
\end{align}
with $u(n)$ and $v(n)$ as white circularly symmetric complex Gaussian random variables with variances $\sigma^2_u > 0$ and $\sigma^2_v > 0$, respectively. Consider $h(n)$ to be i.i.d. samples of a random variable (starting from Sec. \ref{Sec_III_B}, we will assume that channel is Rayleigh fading). We also assume that $h(n)$ cannot be equal to zero for all $n$ (non-existent channel is not included). This signal model characterizes e.g. measurements of a first-order Gauss-Markov process sent over a fading channel. It is assumed that perfect knowledge of the random channel $h(n)$ is available at the receiver and $h(n)$ are also independent of $u(n)$ and $v(n)$. For further development in Sec. \ref{Sec_III_A}, we require that $h(1) \neq 0$ and $\rho \neq 0$. The objective at the receiver is optimal estimation of the signal $x(n)$, given the channel outputs. 

Given the previous assumptions, and with $h(n)$ independent of $u(n)$ and $v(n)$, the optimal MMSE estimator of ${x}(n)$ based on the observations $y(n)$ is the well-known Kalman filter \cite{kailath2000linear} with the following steps
\begin{align}
& \label{eqIII_1}\hat{x}(n | n-1) = \rho \hat{x}(n-1 | n-1)\\
& \label{eqIII_2} P(n) = \rho^2 M(n-1) +\sigma^2_u\\
&  \label{eqIII_3} K(n) = P(n) h^*(n) [\sigma^2_v + |h(n)|^2 P(n)]^{-1}\\
& \label{eqIII_4} \hat{x}(n | n) = \hat{x}(n | n-1) + K(n) (y(n) - h(n) \hat{x}(n | n-1))\\
& \label{eqIII_5} M(n) = \left(1 - K(n) h(n)\right) P(n).
\end{align}
Concisely stated, eq. \eqref{eqIII_1} is the prediction of the current state based on the previous estimated state (\textit{a priori estimate}) using the system model and eq. \eqref{eqIII_2} is the instantaneous expected (with respect to noise) prediction error. Equation \eqref{eqIII_3} is the corresponding Kalman gain equation and eq. \eqref{eqIII_4} is the correction equation based on the Kalman gain update (\textit{a posteriori estimate}). Finally eq. \eqref{eqIII_5} provides us with the instantaneous estimation error variance.  

It is straightforward to show that both the $P(n)$, i.e. the prediction error variance and $M(n)$, i.e. the estimation error variance may be written recursively in terms of their previous values and current value of $h(n)$, where one is a deterministic function of the other. The statistical properties of the one may then be acquired using the statistical properties of the other. In the rest of this paper, we study $M(n)$. 

The recursion for $M(n)$ is obtained  as follows
\begin{align}
 M(n) & = \left(1 - K(n) h(n)\right) P(n)\nonumber\\
& = \left (1 - \frac{P(n) |h(n)|^2}{\sigma^2_v + |h(n)|^2 P(n)}\right) P(n)\nonumber\\
 & = P(n) - \frac{P^2(n) |h(n)|^2}{\sigma^2_v + |h(n)|^2 P(n)}\nonumber\\
& = \frac{P(n)\sigma^2_v}{\sigma^2_v + |h(n)|^2 P(n)}\nonumber\\
& = \frac{P(n)}{1 + |h(n)|^2 P(n) / \sigma^2_v}\nonumber\\
& \stackrel{(a)}= \frac{\rho^2 M(n-1) +\sigma^2_u}{1 + \gamma(n) \left(\rho^2 M(n-1) +\sigma^2_u \right)},\label{eq_15}
\end{align}
where in $(a)$ we have set $\gamma(n) = |h(n)|^2 / \sigma^2_v$ to simplify representation of the functions which depend on the channel. With this notation, $\gamma(n)$ corresponds to the instantaneous SNR at the receiver side. 

In order to characterize the random estimation outage event, we define estimation error outage probability (EOP) as
\begin{align}
P^n_\textnormal{out}({M_\textnormal{th}})= \textnormal{Pr}(M(n) \geq M_\textnormal{th} )
\end{align}
and in particular the asymptotic EOP which is of interest, in order to characterize the steady-state distributions, i.e. 
\begin{align}
P_\textnormal{out}(M_\textnormal{th})= \lim_{n\rightarrow \infty} P_\textnormal{out}^n({M_\textnormal{th}}) = \lim_{n\rightarrow \infty} \textnormal{Pr}(M(n) \geq M_{\textnormal{th}}).
\end{align}
Clearly $P^n_\textnormal{out}({M_\textnormal{th}}) = 1 - F_{M(n)}(M_\textnormal{th})$ and $P_\textnormal{out}(M_\textnormal{th}) = 1 - F_M(M_\textnormal{th})$,
where $F_{M(n)}(M)$ $\left(F_M(M)\right)$ is the  (steady state) cumulative distribution function (cdf) of $M(n)$. 
%%%%%%%%%%%%%%%%%%%%%%%%%%%%%%%%%%%%%%%%
%%%%%%%%%%%%%%%%%%%%%%%%%%%%%%%%%%%%%%%%
%%%%%%%%%%%%%%%%%%%%%%%%%%%%%%%%%%%%%%%%
%%%%%%%%%%%%%%%%%%%%%%%%%%%%%%%%%%%%%%%%
%%%%%%%%%%%%%%%%%%%%%%%%%%%%%%%%%%%%%%%%
%%%%%%%%%%%%%%%%%%%%%%%%%%%%%%%%%%%%%%%%
\section{Statistical Properties of Instantaneous Estimation Error Variance}\label{Sec_III}
In this section, we study the asymptotic probability density function of the IEV. Because $F_M(M)$ and $f_M(M)$ are related with a linear operation (derivative), we begin to study $f_M(M)$. After that, the EOP will readily be obtained with one integration operation.
%%%%%%%%%%%%%%%%%%%%%%%%%%%%%%%%%%%%%%%%
%%%%%%%%%%%%%%%%%%%%%%%%%%%%%%%%%%%%%%%%
%%%%%%%%%%%%%%%%%%%%%%%%%%%%%%%%%%%%%%%%
\subsection{Asymptotic Pdf of The Instantaneous Estimation Error Variance}\label{Sec_III_A}
Given \eqref{eq_15}, it is easy to verify that for any arbitrary positive real number $M$, $M(n) \leqslant M$ leads to 
\begin{align*}
\gamma(n) \geqslant \frac{1}{M} - \frac{1}{\rho^2 M(n-1) + \sigma^2_u}.
\end{align*} 
Also, we have that $\gamma(n) \geqslant 0$ and $0 < M(n) < M_\textnormal{max}$, where $M_\textnormal{max}$, the upper limit for $M(n)$ is obtained from 
\begin{align}
M_\textnormal{max} = \left \{ \begin{array}{ll} \infty, & |\rho|\geqslant 1\\
\frac{\sigma^2_u}{1-\rho^2}, & |\rho| < 1 .
\end{array}\right.
\end{align} 
$M_\textnormal{max}$ is effectively the estimation error variance for the worst channel, i.e. $\gamma(n) = 0$ with probability 1. In that case, the best estimator is the average mean, i.e. $\hat{x}(n) = E(x(n)) = 0$ and therefore the estimation error variance is equal to $M_\textnormal{max}$. Also note that the case $|\rho| \geqslant 1$ is of little practical importance in our case, because for a divergent signal, continuous-amplitude uncoded transmission would not be practical. It is however included in the definition of $M_\textnormal{max}$ to show that the analysis does not depend on the value of $\rho$. 
 
Given the above limits and conditions for $\gamma(n)$ and $M(n)$ and according to \cite{papoulis2002probability}, it is possible to get the cdf of $M(n)$, i.e. $F_{M(n)}(M)$ as follows. 
\begin{align}
 F_{M(n)}(M) = \int_{0}^{M_\textnormal{max}}\int_{\frac{1}{M} - \frac{1}{\rho^2 M(n-1) + \sigma^2_u}}^{\infty} f_{\gamma(n), M(n-1)}\left(\gamma(n), M(n-1)\right) d\gamma(n) \d M(n-1),
\end{align}
where $ f_{\gamma(n), M(n-1)}\left(\gamma(n), M(n-1)\right)$ is the joint pdf of $\gamma(n)$ and $M(n-1)$. The pdf for $M(n)$ is then obtained by simply applying $f_{M(n)}(M) = \frac{\partial}{\partial M} F_{M(n)}(M)$. That leads to 
\begin{align}
 f_{M(n)}(M) & = \int_{0}^{M_\textnormal{max}} \frac{\partial}{\partial M} \int_{\frac{1}{M} - \frac{1}{\rho^2 M(n-1) + \sigma^2_u}}^{\infty} 
 f_{\gamma(n), M(n-1)}\left(\gamma(n), M(n-1)\right) d\gamma(n) \d M(n-1)\nonumber\\
& = \int_{0}^{M_\textnormal{max}} \frac{1}{M^2}  f_{\gamma(n), M(n-1)}\left(\frac{1}{M} - \frac{1}{\rho^2 M(n-1) + \sigma^2_u}, M(n-1)\right) \d M(n-1),
\end{align}
or with some change of notation, 
\begin{align}
 f_{M(n)}(M) = \int_{0}^{M_\textnormal{max}} \frac{1}{M^2} f_{\gamma(n), M(n-1)}\left(\frac{1}{M} - \frac{1}{\rho^2 m + \sigma^2_u}, m\right) \d m.\label{eqn_f_M}
\end{align}
Now if we assume that $\gamma(n)$ is independent of $M(n-1)$ \big($\gamma(n) \ci M(n-1)$\big), we may rewrite \eqref{eqn_f_M} as 
\begin{align}
 f_{M(n)}(M) =  \frac{1}{M^2} \int_{0}^{M_\textnormal{max}} f_{\gamma(n)}\left(\frac{1}{M} - \frac{1}{\rho^2 m + \sigma^2_u}\right) f_{M(n-1)} (m) \d m.\label{eqn_f_M_II}
\end{align}
Note that as we have assumed i.i.d. channels, then we have that $\gamma(n) \ci \gamma (i)$ for $i < n$. We can simply assume that $\gamma(i) \ci M(0)$ ($M(0)$ is a constant). As a result, we obtain that $\gamma(n) \ci M(n-1)$ because $M(n-1)$ is a function of $M(0)$ and $\gamma(1), \gamma(2), \cdots, \gamma(n-1)$ only. Thus i.i.d. channel assumption is a sufficient condition to get the main result in \eqref{eqn_f_M_II}.
As we have assumed an i.i.d. channel, all $\gamma(n)$ have the same pdf, i.e. $f_{\gamma(n)}(\gamma(n)) = f_\gamma(\gamma(n))$. Therefore, we rewrite \eqref{eqn_f_M_II} to obtain
\begin{align}
 f_{M(n)}(M) = \frac{1}{M^2} \int_{0}^{M_\textnormal{max}} f_{\gamma}\left(\frac{1}{M} - \frac{1}{\rho^2 m + \sigma^2_u}\right) f_{M(n-1)} (m) \d m.\label{eqn_f_M_II_a}
\end{align}
Equation \eqref{eqn_f_M_II_a} may be used to find $f_{M(n)}(M)$ for any $n$ by starting from $f_{M(0)}(M) = \delta(M - M(0))$ and iterating over $n$. However, the objective of this paper is outage analysis and for that purpose, we need the steady-state distribution. In the following, we present Theorem \ref{theorem:I} which proves the existence of a steady-state distribution for $M(n)$, namely $f_M(M)$ and outlines how it can be obtained.    
\begin{Theorem}\label{theorem:I}
The random process $M(n)$ converges in law and has a steady-state distribution, namely $f_M(M)$ which satisfies the following equality
\begin{align}
 f_{M}(M) = \frac{1}{M^2} \int_{0}^{M_\textnormal{max}} f_{\gamma}\left(\frac{1}{M} - \frac{1}{\rho^2 m + \sigma^2_u}\right) f_{M} (m) \d m.\label{eqn_f_M_III}
\end{align}
\end{Theorem}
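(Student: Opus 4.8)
The plan is to view the recursion \eqref{eq_15} as an iterated random function system and to exploit the very particular form of the one-step map. Writing $\phi_\gamma(m) = \frac{\rho^2 m + \sigma^2_u}{1 + \gamma(\rho^2 m + \sigma^2_u)}$, equation \eqref{eq_15} reads $M(n) = \phi_{\gamma(n)}(M(n-1))$, a composition of i.i.d.\ random maps indexed by $\gamma(n)$. The first step is to record a few elementary facts about $\phi_\gamma$, valid for every $\gamma \geq 0$: it is strictly increasing in $m$ (recall $\rho \neq 0$); it is Lipschitz, with $|\phi_\gamma(m_1) - \phi_\gamma(m_2)| = \frac{\rho^2 |m_1 - m_2|}{(1 + \gamma(\rho^2 m_1 + \sigma^2_u))(1 + \gamma(\rho^2 m_2 + \sigma^2_u))} \leq \rho^2 |m_1 - m_2|$, the bound being uniform in $\gamma$; and, for $\gamma > 0$, it maps $[0, \infty]$ monotonically onto the bounded interval $[\,\sigma^2_u/(1 + \gamma\sigma^2_u),\, 1/\gamma\,]$. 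Since a non-existent channel is excluded, $\gamma(n) > 0$ almost surely (with probability one for Rayleigh fading), so all of this holds a.s.

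Next I would pass to the backward iteration $\Phi_n := \phi_{\gamma(1)} \circ \phi_{\gamma(2)} \circ \cdots \circ \phi_{\gamma(n)}$. Because $(\gamma(1), \dots, \gamma(n))$ is exchangeable, $\Phi_n(M(0)) \stackrel{d}{=} M(n)$ for every $n$, so it suffices to prove $\Phi_n(M(0))$ converges almost surely. By the monotonicity above the images $I_n := \Phi_n([0, \infty]) = [\Phi_n(0), \Phi_n(\infty)]$ form a nested decreasing family of intervals; hence $\Phi_n(0)$ increases, $\Phi_n(\infty)$ decreases, both to limits $L_- \leq L_+$, and $\Phi_n(M(0))$ is squeezed between them. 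The crux is to show $|I_n| = \Phi_n(\infty) - \Phi_n(0) \to 0$ a.s. When $|\rho| < 1$ this is immediate: for any fixed $k$ and all $n \geq k$, $I_n = \Phi_{k-1}\big(\phi_{\gamma(k)}(\cdots)\big) \subseteq \Phi_{k-1}\big(\phi_{\gamma(k)}([0, \infty])\big)$, and since $\Phi_{k-1}$ is $\rho^{2(k-1)}$-Lipschitz while $\phi_{\gamma(k)}([0, \infty])$ has length at most $1/\gamma(k)$, we get $|I_n| \leq \rho^{2(k-1)}/\gamma(k)$; letting $n \to \infty$ for fixed $k$ gives $L_+ - L_- \leq \rho^{2(k-1)}/\gamma(k)$, and the right-hand side tends to $0$ along the set $\{k : \gamma(k) \geq \varepsilon\}$, which is almost surely infinite by the second Borel--Cantelli lemma. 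Hence $\Phi_n(M(0)) \to M_\infty$ a.s., so $M(n) \to M_\infty$ in law. (For $|\rho| \geq 1$ the same conclusion holds under the mild condition $E\big[\log(1 + \gamma\sigma^2_u)\big] > \log|\rho|$ --- automatic when $|\rho| \leq 1$ --- by bounding $\mathrm{Lip}(\Phi_{n-1})$ via the strong law of large numbers applied to $\sum_k \log\!\big(\rho^2/(1+\gamma(k)\sigma^2_u)^2\big)$ together with $|I_n| \leq \mathrm{Lip}(\Phi_{n-1})/\gamma(n)$; I expect this to be the only genuine subtlety, and it concerns precisely the regime the paper itself flags as impractical.)

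It remains to identify the law of $M_\infty$ with a density satisfying \eqref{eqn_f_M_III}. From the backward construction one also has $M_\infty = \phi_{\gamma(1)}(\tilde M_\infty)$ with $\tilde M_\infty \stackrel{d}{=} M_\infty$ and $\tilde M_\infty \ci \gamma(1)$ --- the inner composition $\phi_{\gamma(2)} \circ \cdots \circ \phi_{\gamma(n)}(M(0))$ converges a.s.\ to a copy of $M_\infty$ built from $\gamma(2), \gamma(3), \dots$, and $\phi_{\gamma(1)}$ is continuous. For each fixed $m$ the map $\gamma \mapsto \phi_\gamma(m)$ is smooth and strictly monotone, so conditioning on $\tilde M_\infty = m$ and changing variables shows the conditional law of $M_\infty$ is absolutely continuous, and mixing over $m$ keeps it so; call the density $f_M$. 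The identity $M_\infty \stackrel{d}{=} \phi_\gamma(M_\infty)$ with $\gamma \ci M_\infty$ says exactly that $f_M$ is invariant under the one-step transition, and that transition, acting on densities, is precisely \eqref{eqn_f_M_II_a} --- which was derived under exactly the independence $\gamma(n) \ci M(n-1)$ in force here. Substituting $f_{M(n)} = f_{M(n-1)} = f_M$ and $f_{\gamma(n)} = f_\gamma$ into \eqref{eqn_f_M_II_a} yields \eqref{eqn_f_M_III}. As a consistency check, integrating the right-hand side of \eqref{eqn_f_M_III} over $M$ and substituting $\gamma' = \tfrac{1}{M} - \tfrac{1}{\rho^2 m + \sigma^2_u}$ (so that $\tfrac{1}{M^2}\,\d M = -\,\d{\gamma'}$) returns $\int_0^{M_\textnormal{max}} f_M(m)\,\d m = 1$, so the operator preserves total probability, as it must. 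The principal obstacle in all of this is the collapse of the nested intervals in the second step, and within it the unbounded-$\rho$ case.
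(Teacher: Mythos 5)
Your proof is essentially correct but follows a genuinely different route from the paper. The paper outsources both halves of the theorem: convergence in law is obtained by verifying the hypotheses (conditional Gaussianity, weak observability/controllability, integrability of the $\log\log^+$ quantities) of Bougerol's Theorem 2.4 on Kalman filtering with random coefficients, and the invariance equation \eqref{eqn_f_M_III} is obtained by identifying the transition kernel $\frac{1}{M^2} f_{\gamma}\bigl(\frac{1}{M} - \frac{1}{\rho^2 m + \sigma^2_u}\bigr)$ of the Markov process $M(n)$ and citing a standard result on invariant measures of Markov chains. You instead give a self-contained iterated-random-function argument: the uniform $\rho^2$-Lipschitz bound on $\phi_\gamma$, the $1/\gamma$ bound on its range, the backward iteration $\Phi_n$ with nested images, and Borel--Cantelli to collapse the intervals. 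This buys an explicit almost-sure limit $M_\infty$ and a transparent derivation of the distributional fixed-point equation $M_\infty \stackrel{d}{=} \phi_\gamma(\tilde M_\infty)$, which immediately yields \eqref{eqn_f_M_III} via \eqref{eqn_f_M_II_a}; the paper's citations, by contrast, buy generality. The one substantive difference is the regime $|\rho| > 1$, which the theorem as stated does cover: your Euclidean contraction argument needs the extra condition $E[\log(1+\gamma\sigma^2_u)] > \log|\rho|$, whereas Bougerol's result rests on contraction of the Riccati map in the invariant Riemannian (Birkhoff-type) metric on the positive cone and needs no such moment condition --- only observability/controllability and integrability. You flag this honestly, and the paper itself dismisses $|\rho|\geq 1$ as impractical, but strictly speaking your proof covers a slightly smaller parameter range than the cited theorem unless that step is replaced by the projective-metric contraction. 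A second, minor point: your absolute-continuity argument for the law of $M_\infty$ uses that $\gamma$ has a density, which is implicit in the theorem's use of $f_\gamma$ but worth stating.
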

\begin{proof}
See Appendix \ref{app_1}.
\end{proof}
After stating Theorem \ref{theorem:I}, we will utilize \eqref{eqn_f_M_III} for the rest of the analysis in order to characterize $f_M(M)$. To be more specific with \eqref{eqn_f_M_III}, we use the fact that $\gamma(n) \geqslant 0$. That necessitates that the argument of the function $f_\gamma()$ should always be positive. Clearly, if $M \leqslant \sigma^2_u$, the term $\frac{1}{M} - \frac{1}{\rho^2 m + \sigma^2_u}$ is always positive. However for $M > \sigma^2_u$, the integral should be taken over the range of $\gamma$ where $\frac{1}{M} - \frac{1}{\rho^2 m + \sigma^2_u} \geqslant 0$, i.e. for $m\geqslant \frac{M-\sigma^2_u}{\rho^2}$. With this background, we can finally provide the following lemma that describes the asymptotic pdf of $M(n)$, i.e. $f_M(M)$ in terms of itself integrated with a kernel which is a function of the instantaneous channel SNR. Solving this equation leads to $f_M(M)$ and with one integration to $P_\textnormal{out}$, which is the target.   
\begin{Lemma}\label{lemma:I}
Asymptotic pdf of $M(n)$, i.e. $f_M(M)$ can be obtained from the following equation
\begin{align}
 f_M(M) \label{eqn_main_her}= \left\{
\begin{array}{ll}
 \frac{1}{M^2}\int_{0}^{M_\textnormal{max}} f_{\gamma}\left( \frac{1}{M} - \frac{1}{\rho^2 m +\sigma^2_u}\right)f_M(m)\d m, & M\leqslant \sigma^2_u\\
 \frac{1}{M^2}\int_{\frac{M-\sigma^2_u}{\rho^2}}^{M_\textnormal{max}} f_{\gamma}\left( \frac{1}{M} - \frac{1}{\rho^2 m +\sigma^2_u}\right)f_M(m)\d m, & M> \sigma^2_u.
\end{array} \right.
\end{align}
\end{Lemma}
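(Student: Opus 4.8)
The plan is to obtain Lemma~\ref{lemma:I} directly from Theorem~\ref{theorem:I}, i.e.\ from the fixed-point equation \eqref{eqn_f_M_III}, using only the nonnegativity of the instantaneous SNR. First I would recall that $\gamma(n) = |h(n)|^2/\sigma^2_v \geqslant 0$ by construction, so $\gamma(n)$ is a nonnegative random variable and its density $f_\gamma$ is supported on $[0,\infty)$; in particular $f_\gamma(z)=0$ for every $z<0$. Next I would isolate the argument that appears inside $f_\gamma$ in \eqref{eqn_f_M_III}, namely $g(M,m) := \frac{1}{M} - \frac{1}{\rho^2 m + \sigma^2_u}$, and note that, since $\rho\neq 0$, the map $m\mapsto \rho^2 m + \sigma^2_u$ is strictly increasing on $[0,M_\textnormal{max}]$, hence $g(M,\cdot)$ is strictly increasing in $m$. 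Its sign as a function of $M$ is then elementary to control.

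The second step is the case split. For $M\leqslant \sigma^2_u$ one has $\rho^2 m + \sigma^2_u \geqslant \sigma^2_u \geqslant M$ for all $m\in[0,M_\textnormal{max}]$, so $g(M,m)\geqslant 0$ over the entire integration range; no term of the integrand is annihilated, and \eqref{eqn_f_M_III} is retained verbatim, which is the first branch of \eqref{eqn_main_her}. For $M>\sigma^2_u$, the inequality $g(M,m)\geqslant 0$ is equivalent to $\rho^2 m + \sigma^2_u \geqslant M$, i.e.\ to $m\geqslant \frac{M-\sigma^2_u}{\rho^2}$; for $m$ strictly below this threshold the argument of $f_\gamma$ is strictly negative and the integrand vanishes identically. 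Therefore the contribution of $m\in\bigl[0,\tfrac{M-\sigma^2_u}{\rho^2}\bigr)$ to the integral in \eqref{eqn_f_M_III} is zero, and the integral may be truncated to $\bigl[\tfrac{M-\sigma^2_u}{\rho^2}, M_\textnormal{max}\bigr]$, giving the second branch.

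Finally I would check the book-keeping at the interfaces. At $M=\sigma^2_u$ the lower limit $\frac{M-\sigma^2_u}{\rho^2}$ equals $0$, so the two branches agree there; when $|\rho|<1$ one verifies $\frac{M_\textnormal{max}-\sigma^2_u}{\rho^2}=M_\textnormal{max}$, so as $M\to M_\textnormal{max}$ the integration interval shrinks to a point and $f_M(M)$ collapses to $0$, consistently with $f_M$ being supported on $(0,M_\textnormal{max})$; when $|\rho|\geqslant 1$, $M_\textnormal{max}=\infty$ and no such check is required. I do not expect a real obstacle here: the statement is essentially a tightening of the domain of integration in \eqref{eqn_f_M_III}. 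The only point deserving care is the rigorous justification that $f_\gamma$ is null on the negative half-line --- which is immediate from $\gamma(n)\geqslant 0$ --- together with the (measure-zero, hence harmless) treatment of the single boundary value $m=\frac{M-\sigma^2_u}{\rho^2}$ at which $g(M,m)=0$.
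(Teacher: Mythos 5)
Your argument is correct and is essentially the paper's own: the text immediately preceding Lemma~\ref{lemma:I} derives the two branches from \eqref{eqn_f_M_III} exactly by noting that $\gamma(n)\geqslant 0$ forces the argument of $f_\gamma$ to be nonnegative, which is automatic for $M\leqslant\sigma^2_u$ and truncates the integration range to $m\geqslant\frac{M-\sigma^2_u}{\rho^2}$ for $M>\sigma^2_u$. Your additional consistency checks at $M=\sigma^2_u$ and $M\to M_\textnormal{max}$ are a harmless refinement.
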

The solution is general and is explicitly given in terms of instantaneous channel SNR pdf and system parameters. Though \eqref{eqn_main_her} can be solved numerically if needed, the general closed-form solution does not seem to be readily attainable. In the following, we have focused on the important case of Rayleigh fading channels where $f_\gamma(\gamma) = \lambda \textnormal{e}^{-\lambda \gamma} \mathcal{U}(\gamma)$ \big($\mathcal{U}()$ is the unit step function\big). Note that with this definition, $\lambda = 1 /  E(\gamma(n)) = \sigma^2_v /  E(|h(n)|^2)$, i.e. stronger channels yield smaller values for $\lambda$ and vice versa.   
\begin{figure}
\centering
\includegraphics[scale = 0.65] {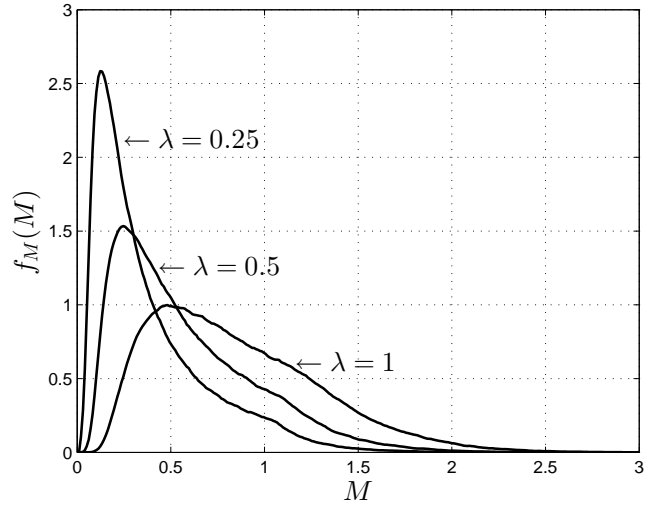}
\caption{Asymptotic pdf of $M(n)$ for $\sigma^2_u = \sigma^2_v = 1$, $\rho = 0.95$, $\lambda = 1, 0.5, 0.25$ (SNR = 0, 3, 6 dB respectively). Note the break point at $M = \sigma^2_u$.}
\label{fig_1}
\end{figure}
%%%%%%%%%%%%%%%%%%%%%%%%%%%%%%%%%%%%%%%%
%%%%%%%%%%%%%%%%%%%%%%%%%%%%%%%%%%%%%%%%
%%%%%%%%%%%%%%%%%%%%%%%%%%%%%%%%%%%%%%%%
\subsection{Pdf of The Instantaneous Estimation Error Variance Under Rayleigh Fading}\label{Sec_III_B}
We can rewrite \eqref{eqn_main_her} given that channel is i.i.d. Rayleigh fading. Using that we obtain
\begin{align}
 f_M(M) =  \frac{\lambda}{M^2} \textnormal{exp}(\frac{-\lambda}{M}) \left\{
\begin{array}{ll}
 \int_{0}^{M_\textnormal{max}} \textnormal{exp}\left(\frac{\lambda}{\rho^2 m +\sigma^2_u}\right)f_M(m)\d m, & M\leqslant \sigma^2_u\\
 \int_{\frac{M-\sigma^2_u}{\rho^2}}^{M_\textnormal{max}} \textnormal{exp}\left(\frac{\lambda}{\rho^2 m +\sigma^2_u}\right)f_M(m)\d m, & M> \sigma^2_u,
\end{array} \right.
\label{eqn_main}\end{align}
which in order to get more insight and with some algebraic manipulation, can also be written as 
\begin{align}
 f_M(M)  = \frac{\lambda}{M^2}\textnormal{exp}(\frac{-\lambda}{M}) \left\{
\begin{array}{ll}
 \kappa , & M\leqslant \sigma^2_u\\
 \kappa
-\int_{0}^{\frac{M-\sigma^2_u}{\rho^2}} \textnormal{exp}\left(\frac{\lambda}{\rho^2 m +\sigma^2_u}\right)f_M(m)\d m, & M> \sigma^2_u,
\end{array} \right.
\label{eqn_main_2}\end{align}
where
\begin{align}
\label{eqn_kappa}\kappa = \int_{0}^{M_\textnormal{max}}\textnormal{exp}\left(\frac{\lambda}{\rho^2 m +\sigma^2_u}\right)f_M(m)\d m.
\end{align}
Though in general $\kappa$ depends on the pdf itself, \eqref{eqn_main_2} is insightful in the sense that it shows the exact shape of the pdf for the first part where $M \leqslant \sigma^2_u$.  

Typical shapes of such pdf's which are obtained through Monte-Carlo simulations are depicted in Fig. \ref{fig_1} to further highlight the points mentioned. For Fig. \ref{fig_1}, it is assumed that $\sigma^2_u=\sigma^2_v =1$, $\lambda = 1, 0.5, 0.25$ (SNR = 0, 3, 6 dB respectively), and $\rho = 0.95$. Note that the pdf support is theoretically bounded in this case at point $M_\textnormal{max} = \sigma^2_u / (1 - \rho^2) \approx 10.26$ (not shown in the figure due to its insignificance). 
\begin{figure}[t]
\centering
\includegraphics[scale = 0.65] {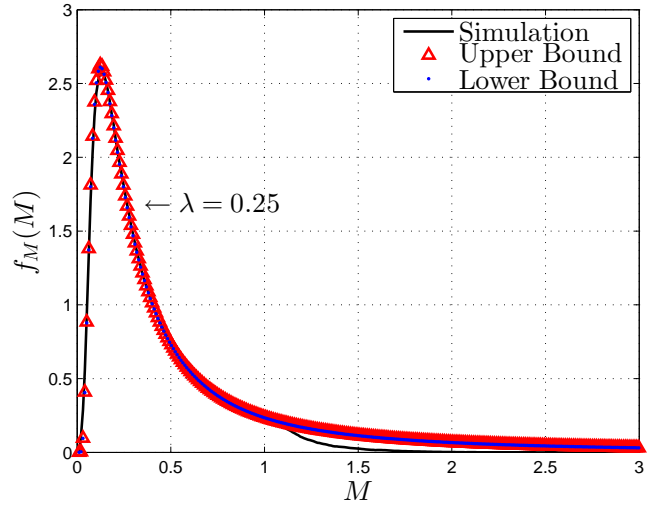}
\caption{Asymptotic pdf of $M(n)$ and its approximates using upper and lower bounds for $\kappa$ given that $\sigma^2_u = \sigma^2_v = 1$, $\rho = 0.95$, $\lambda = 0.25$ (SNR = 6 dB).}
\label{fig_2}
\end{figure}
Also note that the break point, $M = \sigma^2_u$ where the pdf changes shape is quite visible in Fig. \ref{fig_1}. Also, from \eqref{eqn_main_2}, it is easily verified that the pdf has an extremum point at $M = \lambda /2$ for the given SNR values. This extremum point is only a function of the average channel SNR, i.e. $ E(\gamma(n)) = 1/\lambda$.  

The break point $M = \sigma^2_u$ corresponds to the steady-state variance of the signal when there is no correlation ($\rho = 0$), whereas the point $M_\textnormal{max}  = \sigma^2_u / (1 - \rho^2)$ corresponds to the upper limit for the support of $f_M(M)$ (maximum value for the IEV) for the worst channel ($\gamma(n) = 0$) when no information gets passed the channel and the estimator is equal to $\hat{x}(n) = E(x(n)) = 0$. It is quite visible and theoretically verifiable that the pdf tail vanishes rapidly after the break point if the SNR increases. Also that the higher the threshold, the lower the outage probability would be. As a result, getting bounds on the first part helps with understanding the pdf behavior better and at the same time get approximate values and bounds for $P_\textnormal{out}(M_\textnormal{th})$. Using \eqref{eqn_main_2} and \eqref{eqn_kappa}, we find upper and lower bounds for $\kappa$, approximations for the pdf and upper and lower bounds for the outage for the first part ($M\leqslant \sigma^2_u$). Another insight from \eqref{eqn_kappa} is that the pdf shape is independent of whether the system is stable ($\rho < 1$) or unstable ($\rho \geqslant 1$), though the value of $\kappa$ depends on $\rho$. 

Though the pdf is given by the equation $f_M(M) = \frac{\kappa  \lambda}{M^2} \textnormal{exp}(\frac{-\lambda}{M}) \, (M\leqslant \sigma^2_u)$, the exact value of $\kappa$ depends on the whole pdf and cannot be known without solving \eqref{eqn_main_2}. However, it is possible to obtain the following bounds for $\kappa$, namely $\kappa_{l} < \kappa <\kappa_{u}$, which later on are used to characterize two functions $P_\textnormal{out}^l(M_\textnormal{th})$ and $P_\textnormal{out}^u(M_\textnormal{th})$ for which $P_\textnormal{out}^l(M_\textnormal{th}) < P_\textnormal{out}(M_\textnormal{th}) < P_\textnormal{out}^u(M_\textnormal{th})$ for all $M \leqslant \sigma^2_u$.  

\begin{Lemma}\label{lemma:II}
For all $M \leqslant \sigma^2_u$, we have $\kappa_{l} < \kappa <\kappa_{u}$, where
\begin{align}
\kappa_{u} & =\frac{1}{\left( a_{\kappa} \textnormal{exp}\left(\frac{-\lambda}{\sigma^2_u(1+\rho^2)}\right) + \textnormal{exp}(-\frac{\lambda}{\sigma^2_u})\right) }\\
\kappa_{l} & =\frac{1}{\left( a_{\kappa} \textnormal{exp}(\frac{-\lambda}{\rho^2 M_\textnormal{max} + \sigma^2_u}) + \textnormal{exp}(-\frac{\lambda}{\sigma^2_u})\right) },
\end{align}
where we have defined 
\begin{align}
a_{\kappa} = 1- \int_{0}^{\sigma^2_u} \textnormal{exp}(\frac{\lambda}{\rho^2 m + \sigma^2_u}) (\frac{\lambda}{m^2})\textnormal{exp}(\frac{-\lambda}{m}) \d m.
\end{align}
\end{Lemma}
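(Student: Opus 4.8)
The plan is to turn the defining relation \eqref{eqn_kappa} for $\kappa$ into a scalar (in)equality involving $\kappa$ alone, by exploiting the closed form $f_M(m) = \frac{\kappa\lambda}{m^2}e^{-\lambda/m}$, which holds for $m\le\sigma^2_u$ (established just below \eqref{eqn_kappa}). First I would split the range of integration in \eqref{eqn_kappa} at the break point $M=\sigma^2_u$. On $[0,\sigma^2_u]$ I substitute the closed form of $f_M$; by the very definition of $a_\kappa$ that part contributes exactly $\kappa(1-a_\kappa)$. Writing the remaining tail integral as $I := \int_{\sigma^2_u}^{M_\textnormal{max}} \exp\!\left(\frac{\lambda}{\rho^2 m+\sigma^2_u}\right) f_M(m)\,\d m$, relation \eqref{eqn_kappa} then collapses to the clean identity $\kappa a_\kappa = I$.

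Next I would bound $I$. Its total mass factor $\int_{\sigma^2_u}^{M_\textnormal{max}} f_M(m)\,\d m$ can be computed exactly: using again the closed form of $f_M$ on $[0,\sigma^2_u]$ together with the substitution $t=\lambda/m$ one gets $\int_0^{\sigma^2_u} f_M(m)\,\d m = \kappa e^{-\lambda/\sigma^2_u}$, so the tail mass equals $1-\kappa e^{-\lambda/\sigma^2_u}$. Since $\rho\neq 0$, the kernel $m\mapsto \exp\!\left(\frac{\lambda}{\rho^2 m+\sigma^2_u}\right)$ is strictly decreasing, hence on $[\sigma^2_u,M_\textnormal{max}]$ it lies strictly between its value at $m=M_\textnormal{max}$, namely $\exp\!\left(\frac{\lambda}{\rho^2 M_\textnormal{max}+\sigma^2_u}\right)$ (read as $1$ when $M_\textnormal{max}=\infty$), and its value at $m=\sigma^2_u$, namely $\exp\!\left(\frac{\lambda}{\sigma^2_u(1+\rho^2)}\right)$. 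Pulling these constants out of $I$ yields
\begin{align*}
\exp\!\Big(\tfrac{\lambda}{\rho^2 M_\textnormal{max}+\sigma^2_u}\Big)\big(1-\kappa e^{-\lambda/\sigma^2_u}\big) \;<\; I \;<\; \exp\!\Big(\tfrac{\lambda}{\sigma^2_u(1+\rho^2)}\Big)\big(1-\kappa e^{-\lambda/\sigma^2_u}\big),
\end{align*}
with strict inequalities because the kernel is non-constant and $M(n)$ exceeds $\sigma^2_u$ with positive probability (arbitrarily small values of $\gamma$ occur with positive probability, so the tail of $f_M$ carries positive mass; this also shows $a_\kappa = I/\kappa>0$).

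Finally I would substitute $\kappa a_\kappa = I$ into the two inequalities above. Each becomes linear in $\kappa$: on the upper side, collecting the $\kappa$-terms gives $\kappa\big(a_\kappa + \exp(\tfrac{\lambda}{\sigma^2_u(1+\rho^2)}-\tfrac{\lambda}{\sigma^2_u})\big) < \exp(\tfrac{\lambda}{\sigma^2_u(1+\rho^2)})$; dividing by the bracket (strictly positive, being the sum of the nonnegative $a_\kappa$ and a positive exponential) and then normalizing by $\exp(\tfrac{\lambda}{\sigma^2_u(1+\rho^2)})$ yields $\kappa<\kappa_u$. The same manipulation applied to the other inequality gives $\kappa>\kappa_l$. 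I expect the only genuinely delicate points to be the bookkeeping that keeps the closed form of $f_M$ confined to $[0,\sigma^2_u]$ while the crude monotone bound on the kernel is applied only on $[\sigma^2_u,M_\textnormal{max}]$, and the observation that $a_\kappa>0$, which guarantees that $\kappa_l$ and $\kappa_u$ are finite, positive and correctly ordered; the remainder is elementary algebra.
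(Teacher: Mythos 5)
Your proposal is correct and follows essentially the same route as the paper's Appendix B: split the defining integral for $\kappa$ at $M=\sigma^2_u$, use the closed form of $f_M$ on $[0,\sigma^2_u]$ to reduce the relation to $\kappa a_\kappa = I$, compute the tail mass $1-\kappa\,\textnormal{exp}(-\lambda/\sigma^2_u)$ from normalization, bound the monotone kernel by its endpoint values on $[\sigma^2_u,M_\textnormal{max}]$, and solve the resulting linear inequalities for $\kappa$. Your explicit remarks on strictness (positive tail mass) and on $a_\kappa>0$ are points the paper leaves implicit, but they do not change the argument.
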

\begin{proof} 
See Appendix \ref{app_2}. 
\end{proof}
Note that for stable systems, $M_\textnormal{max} = \frac{\sigma^2_u}{1- \rho^2}$ and not surprisingly $\rho^2 M_\textnormal{max} + \sigma^2_u = M_\textnormal{max}$. Then we get 
\begin{align}
\kappa^{b}_{l} & =\frac{1}{\left( a_{\kappa} \textnormal{exp}(\frac{-\lambda}{M_\textnormal{max}}) + \textnormal{exp}(-\frac{\lambda}{\sigma^2_u})\right) }
\end{align}
For unstable systems we have $M_\textnormal{max} \rightarrow \infty$, and as a result \begin{align}
\kappa^{\infty}_{l} & =\frac{1}{\left( a_{\kappa} + \textnormal{exp}(-\frac{\lambda}{\sigma^2_u})\right) }
\end{align}
To show how tight the bounds are, we have plotted the simulated pdf and two approximates using the bounds for $\kappa$ in Fig. \ref{fig_2}, given that $\sigma^2_u=\sigma^2_v =1$, $\lambda = 0.25$ (SNR = 6dB), and $\rho = 0.95$. 
 
With Lemma \ref{lemma:II} at hand, we are now ready to present the bounds for $P_\textnormal{out}(M_\textnormal{th})$. We then show that the bounds are tight for the high SNR regime, i.e. $\lambda \rightarrow 0$. This is discussed in the next section. 
\begin{figure}
\centering
\includegraphics[scale = 0.65] {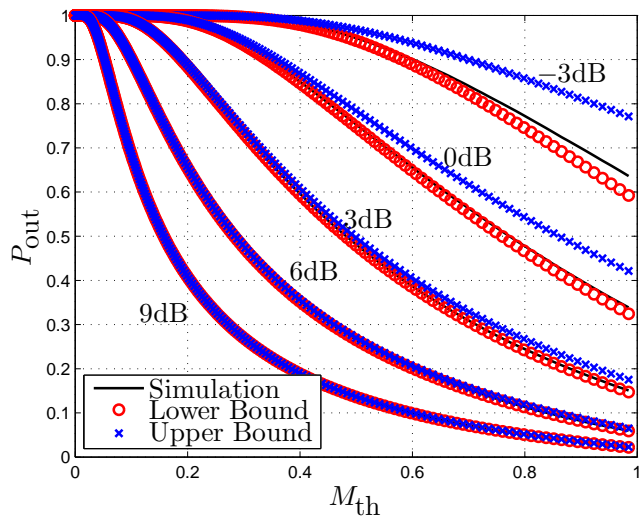}
\caption{$P_\textnormal{out}(M_\textnormal{th})$ and its upper and lower bounds for $\sigma^2_u = \sigma^2_v = 1$, $\rho = 0.95$, $\lambda = 2, 1, 0.5, 0.25, 0.125$ (SNR =-3, 0, 3, 6, 9 dB).}
\label{fig_3}
\end{figure}
%%%%%%%%%%%%%%%%%%%%%%%%%%%%%%%%%%%%%%%%
%%%%%%%%%%%%%%%%%%%%%%%%%%%%%%%%%%%%%%%%
%%%%%%%%%%%%%%%%%%%%%%%%%%%%%%%%%%%%%%%%
%%%%%%%%%%%%%%%%%%%%%%%%%%%%%%%%%%%%%%%%
%%%%%%%%%%%%%%%%%%%%%%%%%%%%%%%%%%%%%%%%
%%%%%%%%%%%%%%%%%%%%%%%%%%%%%%%%%%%%%%%%S
\section {Bounds on Outage Probability for High SNR}\label{sec_IV}
In this section, we get upper and lower bounds for $P_\textnormal{out}(M_\textnormal{th})$. We show that for a given non-zero $M_\textnormal{th}$, EOP decreases with inverse of average channel SNR. We then show that the bounds are tight for the high SNR regime. 

As defined before, $P_\textnormal{out}(M_\textnormal{th})$ is given by
\begin{align}   
P_\textnormal{out} (M_\textnormal{th})= \int_{M_\textnormal{th}}^{M_\textnormal{max}} f_M(M) \d M.
\end{align}
For $M\leqslant \sigma^2_u$, we get 
\begin{align}   
P_\textnormal{out}(M_\textnormal{th}) = \int_{M_\textnormal{th}}^{M_\textnormal{max}} \frac{\kappa  \lambda}{M^2} \textnormal{exp}(\frac{-\lambda}{M}) \d M = 1 - \kappa \textnormal{exp}(\frac{-\lambda}{M_\textnormal{th}}).
\end{align}
As shown in the previous section, $\kappa_l < \kappa < \kappa_u$. As a result, we get 
\begin{align}
1 - \kappa_u \textnormal{exp}(\frac{-\lambda}{M_\textnormal{th}}) < P_\textnormal{out} (M_\textnormal{th}) < 1 - \kappa_l \textnormal{exp}(\frac{-\lambda}{M_\textnormal{th}}), 
\end{align}
which gives us an upper bound and a lower bound for $P_\textnormal{out}(M_\textnormal{th})$. Figure \ref{fig_3} depicts the outage probability and the bounds for the case when $\sigma^2_u=\sigma^2_v =1$, $\lambda = 2, 1, 0.5, 0.25, 0.125$ (SNR =-3, 0, 3, 6, 9 dB), and $\rho = 0.95$ and for $M\leqslant \sigma^2_u$. 

As seen in Fig. \ref{fig_3}, a smaller $\lambda$ yields a smaller outage probability. It is interesting to see how the increase in the SNR, i.e. a decrease in the value of $\lambda$, will lead to a lower outage probability. Also, we will show that the bounds are tight for high SNR, i.e. $\lambda \rightarrow 0$. 
 \begin{figure}
\centering
\includegraphics[scale = 0.65] {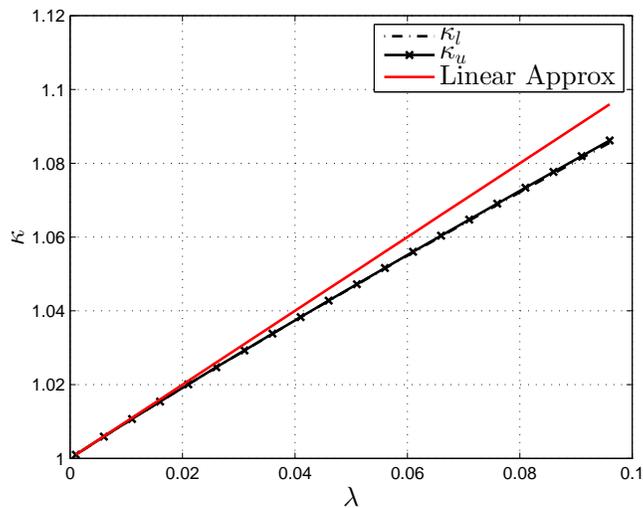}
\caption{$\kappa_l,\kappa_u$ as a function of $\lambda$  for $\sigma^2_u = \sigma^2_v = 1$, $\rho = 0.95$.}
\label{fig_4}
\end{figure}
\begin{Lemma}\label{lemma:III}
The upper and lower bounds for $P_\textnormal{out}(M_\textnormal{th})$ are tight for $\lambda\rightarrow 0$. 
\end{Lemma}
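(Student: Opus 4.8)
\emph{Proof plan.} Fix a threshold with $0<M_\textnormal{th}<\sigma^2_u$ (the endpoint $M_\textnormal{th}=\sigma^2_u$ is handled the same way by direct substitution, and a smaller threshold is the case of practical interest). On this range the results of Section~\ref{Sec_III} give $P_\textnormal{out}(M_\textnormal{th})=1-\kappa\,\textnormal{e}^{-\lambda/M_\textnormal{th}}$ with $\kappa_l<\kappa<\kappa_u$, so that $P_\textnormal{out}^l(M_\textnormal{th})=1-\kappa_u\textnormal{e}^{-\lambda/M_\textnormal{th}}$ and $P_\textnormal{out}^u(M_\textnormal{th})=1-\kappa_l\textnormal{e}^{-\lambda/M_\textnormal{th}}$ bracket $P_\textnormal{out}(M_\textnormal{th})$. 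I read ``tightness as $\lambda\to0$'' as $P_\textnormal{out}^u/P_\textnormal{out}^l\to1$; since the three quantities are ordered, this forces $P_\textnormal{out}^u/P_\textnormal{out}\to1$ and $P_\textnormal{out}^l/P_\textnormal{out}\to1$ as well. Equivalently, it suffices to show $\kappa_l,\kappa_u\to1$ together with a rate for $\kappa_u-\kappa_l$, which is also what Fig.~\ref{fig_4} illustrates.

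The first step is to control $a_\kappa$. The substitution $s=\lambda/m$ turns $\int_0^{\sigma^2_u}\lambda m^{-2}\textnormal{e}^{-\lambda/m}\d m$ into $\int_{\lambda/\sigma^2_u}^{\infty}\textnormal{e}^{-s}\d s=\textnormal{e}^{-\lambda/\sigma^2_u}$, and since $1\le\textnormal{e}^{\lambda/(\rho^2 m+\sigma^2_u)}\le\textnormal{e}^{\lambda/\sigma^2_u}$ on $[0,\sigma^2_u]$, the integral in the definition of $a_\kappa$ lies between $\textnormal{e}^{-\lambda/\sigma^2_u}$ and $1$; hence $0\le a_\kappa\le 1-\textnormal{e}^{-\lambda/\sigma^2_u}\le\lambda/\sigma^2_u$, so $a_\kappa\to0$ and $a_\kappa=O(\lambda)$. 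Second, both $\kappa_u$ and $\kappa_l$ from Lemma~\ref{lemma:II} have the form $\bigl(a_\kappa\textnormal{e}^{-\lambda\theta}+\textnormal{e}^{-\lambda/\sigma^2_u}\bigr)^{-1}$ with a finite $\theta\ge0$ (use $\kappa^{\infty}_l$ for unstable systems), and this denominator always lies in $[\textnormal{e}^{-\lambda/\sigma^2_u},1]$, so $1\le\kappa_l,\kappa_u\le\textnormal{e}^{\lambda/\sigma^2_u}\to1$. Writing $\kappa_u-\kappa_l$ over a common denominator, its numerator equals $a_\kappa\bigl(\textnormal{e}^{-\lambda/(\sigma^2_u(1+\rho^2))}-\textnormal{e}^{-\lambda/(\rho^2 M_\textnormal{max}+\sigma^2_u)}\bigr)$, which by $|\textnormal{e}^{-x}-\textnormal{e}^{-y}|\le|x-y|$ is at most a constant multiple of $a_\kappa\lambda=O(\lambda^2)$, while the denominator stays $\ge\textnormal{e}^{-2\lambda/\sigma^2_u}$; hence $0\le\kappa_u-\kappa_l=O(\lambda^2)$.

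Third, I would lower-bound $P_\textnormal{out}^l$ using $\kappa_u\le\textnormal{e}^{\lambda/\sigma^2_u}$,
\begin{align*}
P_\textnormal{out}^l(M_\textnormal{th})=1-\kappa_u\textnormal{e}^{-\lambda/M_\textnormal{th}}\ \ge\ 1-\textnormal{e}^{-\lambda(1/M_\textnormal{th}-1/\sigma^2_u)}\ \ge\ \tfrac{\lambda}{2}\Bigl(\tfrac{1}{M_\textnormal{th}}-\tfrac{1}{\sigma^2_u}\Bigr)
\end{align*}
for all small $\lambda$, where positivity uses $M_\textnormal{th}<\sigma^2_u$. Fourth, combine:
\begin{align*}
0\ \le\ \frac{P_\textnormal{out}^u(M_\textnormal{th})-P_\textnormal{out}^l(M_\textnormal{th})}{P_\textnormal{out}^l(M_\textnormal{th})}\ =\ \frac{(\kappa_u-\kappa_l)\,\textnormal{e}^{-\lambda/M_\textnormal{th}}}{P_\textnormal{out}^l(M_\textnormal{th})}\ \le\ \frac{O(\lambda^2)}{\tfrac{\lambda}{2}(1/M_\textnormal{th}-1/\sigma^2_u)}\ \longrightarrow\ 0,
\end{align*}
so $P_\textnormal{out}^u/P_\textnormal{out}^l\to1$, which is the claim. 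The same first-order expansion also yields $\kappa,\kappa_l,\kappa_u=1+\lambda/\sigma^2_u+o(\lambda)$ and hence $P_\textnormal{out}(M_\textnormal{th})=\lambda\,(1/M_\textnormal{th}-1/\sigma^2_u)+o(\lambda)$, i.e. the linear-in-$1/\mathrm{SNR}$ behaviour developed next.

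I expect the first step to be the real obstacle: everything downstream needs $a_\kappa$ to be not merely $o(1)$ but $O(\lambda)$ uniformly, so that $\kappa_u-\kappa_l$ is genuinely $O(\lambda^2)$ and the ratio $(\kappa_u-\kappa_l)/P_\textnormal{out}^l$ collapses; the exact evaluation $\int_0^{\sigma^2_u}\lambda m^{-2}\textnormal{e}^{-\lambda/m}\d m=\textnormal{e}^{-\lambda/\sigma^2_u}$ is what lets this go through without a delicate asymptotic analysis of the integral defining $a_\kappa$. The only other point requiring care is the boundary $M_\textnormal{th}=\sigma^2_u$, where $P_\textnormal{out}^l$ is of order $a_\kappa$ rather than $\lambda$, so the ratio must instead be evaluated directly from the closed forms of $\kappa_u,\kappa_l$; one still obtains $P_\textnormal{out}^u/P_\textnormal{out}^l\to1$.
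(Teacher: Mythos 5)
Your proof is correct, and it proves something strictly stronger than what the paper establishes. The paper's own proof of Lemma~\ref{lemma:III} is a one-liner deferring to Appendix~\ref{app_3}: it shows $\lim_{\lambda\to0}\kappa=\lim_{\lambda\to0}\kappa_u=\lim_{\lambda\to0}\kappa_l=1$ and concludes that the outage probability and its bounds ``have the same values'' in the limit --- i.e.\ tightness only in the additive sense $P_\textnormal{out}^u-P_\textnormal{out}^l\to0$, which is weak since all three quantities vanish. You instead prove multiplicative tightness $P_\textnormal{out}^u/P_\textnormal{out}^l\to1$, which requires the rate information $a_\kappa=O(\lambda)$, $\kappa_u-\kappa_l=O(\lambda^2)$, and the matching lower bound $P_\textnormal{out}^l\gtrsim\lambda(1/M_\textnormal{th}-1/\sigma^2_u)$. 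Your treatment of $a_\kappa$ is also cleaner than the paper's: the exact substitution $s=\lambda/m$ giving $\int_0^{\sigma^2_u}\lambda m^{-2}\textnormal{e}^{-\lambda/m}\d m=\textnormal{e}^{-\lambda/\sigma^2_u}$, sandwiched by $1\le\textnormal{e}^{\lambda/(\rho^2m+\sigma^2_u)}\le\textnormal{e}^{\lambda/\sigma^2_u}$, replaces the paper's integration by parts and yields the quantitative bound $0\le a_\kappa\le\lambda/\sigma^2_u$ that the paper never states. Moreover, by squeezing $\kappa$ between $\kappa_l$ and $\kappa_u$ you get $\kappa\to1$ for free, avoiding the paper's separate (and somewhat informal) argument via term-by-term limits and the claim that $f_M$ tends to a Dirac delta. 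Two trivial blemishes: the difference of exponentials in your numerator for $\kappa_u-\kappa_l$ should appear in the order $\textnormal{e}^{-\lambda/(\rho^2M_\textnormal{max}+\sigma^2_u)}-\textnormal{e}^{-\lambda/(\sigma^2_u(1+\rho^2))}$ to be nonnegative (immaterial given your absolute-value bound), and the ratio formulation implicitly requires $P_\textnormal{out}^l>0$, which your lower bound supplies for $M_\textnormal{th}<\sigma^2_u$ and small $\lambda$. As a bonus, your expansion recovers the linear law of Lemma~\ref{lemma:IV} directly.
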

\begin{proof}
As shown in Appendix \ref{app_3}, we have that
\begin{align}
\lim_{\lambda \rightarrow 0} \kappa = \lim_{\lambda \rightarrow 0} \kappa_u = \lim_{\lambda \rightarrow 0} \kappa_l = 1,
\end{align}
which proves the lemma, because then the outage probability and the bounds will have the same values as  $\kappa, \kappa_u, \kappa_l$ converge to the same value. 
\end{proof}
It is also interesting to see how fast $\kappa$ converges to 1 for small values of $\lambda$ and for which values of $\lambda$, the upper and lower bound are approximately equal. This is depicted in Fig. \ref{fig_4}. Quite visibly, for values of $\lambda$ smaller than $0.01$ (SNR greater than 20dB) the upper and lower bounds for $\kappa$ are very close. Due to the fact that the bounds for $\kappa$ are tight, the bounds for $P_\textnormal{out}(M_\textnormal{th})$ are also tight. Even for the range of medium SNR depicted in Fig. \ref{fig_3}, it is quite visible that the upper and lower bounds for the outage probability are quite close to the one obtained from the simulation and that increasing the SNR improves their accuracy. However, the bounds, especially the upper bound lose their accuracy in the low SNR regime. This necessitates taking extra precautions if the bounds are to be used in applications prone to low SNR's. It is also quite visible from Fig. \ref{fig_4} that the linear approximation obtained from the Taylor series expansion of $\kappa$ as a function of $\lambda$ (see Appendix \ref{app_3}) is acurate.  
 \begin{figure}
\centering
\includegraphics[scale = 0.65] {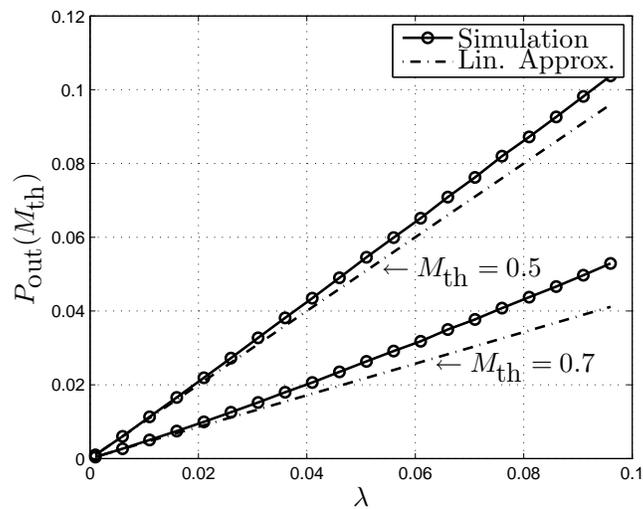}
\caption{$P_\textnormal{out}$ as a function of $\lambda$  for $\sigma^2_u = \sigma^2_v = 1$, $\rho = 0.95$ and its linear approximation}
\label{fig_5}
\end{figure} 
At this point we are ready to present the asymptotic behavior of the outage probability for the high SNR regime. This is presented in \cref{lemma:IV}. 
\begin{Lemma}\label{lemma:IV}
For the high SNR regime, $P_\textnormal{out}(M_\textnormal{th})$ decreases approximately linearly with $\lambda$. 
\end{Lemma}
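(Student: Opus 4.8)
The plan is to read the claim off directly from the closed-form outage expression valid in the first threshold regime, using the high-SNR expansion of $\kappa$ that is already available. Recall that for $M_\textnormal{th} \leqslant \sigma^2_u$ we have the exact identity $P_\textnormal{out}(M_\textnormal{th}) = 1 - \kappa\,\textnormal{exp}(-\lambda/M_\textnormal{th})$, so that the entire dependence of the outage on the channel strength is carried by the scalar $\kappa$ of \eqref{eqn_kappa} and by the exponential prefactor. Since Lemma~\ref{lemma:III} and its proof in Appendix~\ref{app_3} already supply the first-order Taylor expansion of $\kappa$ about $\lambda = 0$, the first step is to record it as $\kappa = 1 + c_1\lambda + O(\lambda^2)$, with $c_1$ an explicit constant depending only on $\sigma^2_u$, $\rho$ and $M_\textnormal{max}$ (obtained by differentiating $\kappa_l$ and $\kappa_u$ at $\lambda = 0$). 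Everything after that is a routine composition of two first-order expansions.

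Next I would expand the exponential, $\textnormal{exp}(-\lambda/M_\textnormal{th}) = 1 - \lambda/M_\textnormal{th} + O(\lambda^2)$, multiply by the expansion of $\kappa$, and collect powers of $\lambda$, which gives
\begin{align}
P_\textnormal{out}(M_\textnormal{th}) = \Big(\tfrac{1}{M_\textnormal{th}} - c_1\Big)\lambda + O(\lambda^2).
\end{align}
Thus, to leading order, $P_\textnormal{out}(M_\textnormal{th})$ is an affine function of $\lambda = 1/E(\gamma(n))$ with no constant term, i.e. it decreases linearly with $\lambda$, equivalently it decays proportionally to the inverse of the average channel SNR. To turn this formal manipulation into a proof I would propagate the same first-order expansion through the explicit bounds $\kappa_u$ and $\kappa_l$ of Lemma~\ref{lemma:II} and check that their linear coefficients coincide; the sandwich $1 - \kappa_u\textnormal{exp}(-\lambda/M_\textnormal{th}) < P_\textnormal{out}(M_\textnormal{th}) < 1 - \kappa_l\textnormal{exp}(-\lambda/M_\textnormal{th})$ then forces $P_\textnormal{out}(M_\textnormal{th})$ to agree with the stated affine function up to $O(\lambda^2)$, so the linear approximation is accurate with a genuinely second-order error. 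A quick sign check of $1/M_\textnormal{th} - c_1$ for $M_\textnormal{th} \leqslant \sigma^2_u$ confirms the slope is nonnegative, so ``decreases with $\lambda$'' is the correct direction.

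The one genuine subtlety — already confronted in Appendix~\ref{app_3}, and therefore assumable here — is that $\kappa$ is an integral against $f_M$, which itself depends on $\lambda$, so one cannot differentiate under the integral sign with the density held fixed; this is exactly why the argument must be routed through the closed-form bounds $\kappa_u, \kappa_l$ and the quantity $a_\kappa$. Inside $a_\kappa$ the integrand $(\lambda/m^2)\,\textnormal{exp}\!\big(\lambda/(\rho^2 m+\sigma^2_u) - \lambda/m\big)$ is singular as $m\to 0$, and the substitution $t = \lambda/m$ converts it into a bounded exponential integral whose value and first-order term in $\lambda$ can be extracted cleanly; that computation is what actually pins down $c_1$ and is the main obstacle. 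Finally, I would note explicitly that the statement is established for $M_\textnormal{th} \leqslant \sigma^2_u$, the only regime in which the clean closed form $P_\textnormal{out} = 1 - \kappa\,\textnormal{exp}(-\lambda/M_\textnormal{th})$ holds; for $M_\textnormal{th} > \sigma^2_u$ the outage is smaller still, so this linear envelope continues to bound it from above.
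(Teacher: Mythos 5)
Your proposal is correct and reaches the same leading-order expression, but it routes the key step --- the first-order Taylor coefficient of $\kappa$ at $\lambda=0$ --- differently from the paper. The paper expands $\kappa=\int_0^{M_\textnormal{max}}\textnormal{exp}\left(\frac{\lambda}{\rho^2 m+\sigma^2_u}\right)f_M(m)\d m$ directly, by first arguing (Appendix \ref{app_3}) that $F_M$ tends to a unit step, hence $f_M$ to a Dirac delta at the origin, as $\lambda\rightarrow 0$; substituting $\delta(m)$ into the integral gives $\kappa=\textnormal{exp}(\lambda/\sigma^2_u)=1+\lambda/\sigma^2_u+\mathcal{O}(\lambda^2)$, and the product with $\textnormal{exp}(-\lambda/M_\textnormal{th})$ yields the slope $1/M_\textnormal{th}-1/\sigma^2_u$ immediately. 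You instead propose to sandwich $\kappa$ between the closed-form bounds $\kappa_l,\kappa_u$ of Lemma \ref{lemma:II} and verify that their linear coefficients agree; this does work, and the computation is shorter than you anticipate: the integration by parts already carried out in Appendix \ref{app_3} shows $I(\alpha)=1+\alpha\bigl(\frac{1}{1+\rho^2}-1+\frac{\rho^2}{1+\rho^2}\bigr)+\mathcal{O}(\alpha^2)=1+\mathcal{O}(\alpha^2)$, so $a_\kappa=\mathcal{O}(\lambda^2)$ and both $\kappa_u$ and $\kappa_l$ expand as $\bigl(\,a_\kappa+\textnormal{exp}(-\lambda/\sigma^2_u)+\mathcal{O}(\lambda^2)\bigr)^{-1}=1+\lambda/\sigma^2_u+\mathcal{O}(\lambda^2)$, pinning $c_1=1/\sigma^2_u$ and making your sign check $1/M_\textnormal{th}-1/\sigma^2_u\geqslant 0$ go through for $M_\textnormal{th}\leqslant\sigma^2_u$. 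What your route buys is rigor: substituting a delta function into the integral and exchanging it with the full Taylor series, as the paper does, is heuristic (convergence in law does not by itself license that term-by-term substitution), whereas the squeeze uses only the explicit, $f_M$-free bounds. What it costs is that you must actually carry out the $a_\kappa$ expansion to obtain $c_1$, a step you flag but leave undone; without it the lemma's conclusion (a nonvanishing, nonnegative slope) is not yet established.
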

\begin{proof}
We can use the Taylor series expansion of $\kappa$ around $\lambda = 0$ from Appendix \ref{app_3} to approximate the outage probability for the high SNR regime. Using the Taylor series expansion for $\kappa$ (from App. \ref{app_3}) and $\textnormal{exp}(\frac{-\lambda}{M_\textnormal{th}})$, we obtain
\begin{align}
P_\textnormal{out} (M_\textnormal{th})\nonumber & = 1- \kappa \textnormal{exp}(\frac{-\lambda}{M_\textnormal{th}})\\
& \nonumber  = 1 -   (  1+ \frac{\lambda}{\sigma^2_u}+ \mathcal{O}(\lambda^2)) ( 1 - \frac{\lambda}{M_\textnormal{th}} + \mathcal{O}(\lambda^2))\\
& = (\frac{1}{M_\textnormal{th}} -\frac {1}{\sigma^2_u})\lambda + \mathcal{O}(\lambda^2)
\end{align}
\end{proof}
For small $\lambda$, $\mathcal{O}(\lambda^2)$ vanishes faster than $\lambda$ and as a result we could claim $P_\textnormal{out}(M_\textnormal{th})$ is approximately a linear function of $\lambda$. The linear approximation is depicted in Fig. \ref{fig_5} for $\lambda \in [0.001, 0.01]$ ($\textnormal{SNR} \in [20\,\textnormal{dB}, 30\, \textnormal{dB}]$). The consequence of this linearity is that though increasing the channel SNR helps with outage probability, it does not help significantly and it may be beneficial to find a trade-off between power consumption and required outage probability for the application at hand.  
%%%%%%%%%%%%%%%%%%%%%%%%%%%%%%%%%%%%%%%%
%%%%%%%%%%%%%%%%%%%%%%%%%%%%%%%%%%%%%%%%
%%%%%%%%%%%%%%%%%%%%%%%%%%%%%%%%%%%%%%%%
%%%%%%%%%%%%%%%%%%%%%%%%%%%%%%%%%%%%%%%%
%%%%%%%%%%%%%%%%%%%%%%%%%%%%%%%%%%%%%%%%
%%%%%%%%%%%%%%%%%%%%%%%%%%%%%%%%%%%%%%%%
\section{Conclusions}
In this paper, a recursive integral equation approach was presented for finding the pdf of the instantaneous estimation error variance for MMSE estimation of scalar Gauss-Markov signals sent over fading channels. We also utilized the notion of estimation error outage as a means of characterizing the estimation accuracy. It was shown that the pdf can be written as a two-part function over the domain of instantaneous estimation error variance values. The first part of the pdf, i.e. the range up to the Gauss-Markov process variance also corresponding to higher outage probabilities, follows a closed-form non-recursive equation. As a result and for the case of i.i.d. Rayleigh fading channels, the outage probability can be approximated with a closed-form formula for the first part. Upper and lower bounds on the estimation error outage probability were also obtained to simplify characterization of estimation error outage. The presented bounds were shown to be tight when the SNR grows unbounded. In the end, it was shown that the outage decreases linearly with inverse of the SNR in the high SNR regime.  
%%%%%%%%%%%%%%%%%%%%%%%%%%%%%%%%%%%%%%%%
%%%%%%%%%%%%%%%%%%%%%%%%%%%%%%%%%%%%%%%%
%%%%%%%%%%%%%%%%%%%%%%%%%%%%%%%%%%%%%%%%
%%%%%%%%%%%%%%%%%%%%%%%%%%%%%%%%%%%%%%%%
%%%%%%%%%%%%%%%%%%%%%%%%%%%%%%%%%%%%%%%%
%%%%%%%%%%%%%%%%%%%%%%%%%%%%%%%%%%%%%%%%
\appendices
%%%%%%%%%%%%%%%%%%%%%%%%%%%%%%%%%%%%%%%%
%%%%%%%%%%%%%%%%%%%%%%%%%%%%%%%%%%%%%%%%
%%%%%%%%%%%%%%%%%%%%%%%%%%%%%%%%%%%%%%%%
%%%%%%%%%%%%%%%%%%%%%%%%%%%%%%%%%%%%%%%%
%%%%%%%%%%%%%%%%%%%%%%%%%%%%%%%%%%%%%%%%
%%%%%%%%%%%%%%%%%%%%%%%%%%%%%%%%%%%%%%%%
\section{Proof of Theorem 1}\label{app_1}
In order to prove Theorem \ref{theorem:I}, we refer to \cite{bougerol1993kalman} which considers Kalman filtering with random coefficients. In \cite{bougerol1993kalman}, a general vector state-space Kalman filter comprises the system model, which can be shown to include our system model as well. There and based on a contraction property of the Kalman filter, it is proven that the sequence of estimation error covariance matrices converges in law to a stationary process \cite[Theorem 2.4]{bougerol1993kalman}, given that some ergodicity conditions are met. In the following, we show in Lemma \ref{lemma:V} that those conditions are also met in our system model. As a result, the equivalent random variable in our case, i.e. $M(n)$ also converges in distribution and therefore $f_M(M)$ in fact exists. Then we prove in Lemma \ref{lemma:VI} why $f_M(M)$ can be obtained from \eqref{eqn_f_M_III}.  
\begin{Lemma}\label{lemma:V}
The instantaneous estimation error variance $M(n)$ converges in distribution.
\end{Lemma}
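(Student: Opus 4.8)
The plan is to recognize the scalar model of Section~\ref{Sec_II} as a special case of the random-coefficient Kalman filtering framework of \cite{bougerol1993kalman} and then to invoke \cite[Theorem~2.4]{bougerol1993kalman}, which yields convergence in law of the error covariance sequence once its driving coefficient process is stationary and ergodic and a controllability/observability condition together with a logarithmic-moment condition is in force. Accordingly, the first step is to write our model in the vector form used there, specialized to state dimension one: transition coefficient $\rho$, process-noise variance $\sigma^2_u$, observation coefficient $h(n)$, observation-noise variance $\sigma^2_v$, and random coefficient sequence $\theta(n)=h(n)$ (equivalently $\gamma(n)=|h(n)|^2/\sigma^2_v$).

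The second step is to check the hypotheses one at a time. Stationarity and ergodicity of $\{\theta(n)\}$ are immediate, since the $h(n)$ are i.i.d.\ and any i.i.d.\ sequence is stationary and ergodic. The controllability (reachability) condition holds because $\sigma^2_u>0$, so the one-step reachability Gramian equals $\sigma^2_u$ and is bounded away from zero deterministically. The observability (detectability) condition holds because $\rho\neq 0$ and $h(1)\neq 0$, and, more generally, because $h(n)=0$ cannot occur for all $n$, so that with probability one infinitely many $h(n)$ are nonzero and the random observability Gramian over a suitable window is a.s.\ nondegenerate, which in the scalar case is all that is required. Finally, the logarithmic-moment conditions reduce here to $E\log^+|\rho|<\infty$ and $E\log^+\sigma^2_u<\infty$, which are trivially finite, and $E\log^+|h(n)|^2<\infty$, which holds for any reasonable channel statistics and in particular for Rayleigh fading, where $|h(n)|^2$ is exponential and therefore has finite moments of all orders.

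With these hypotheses verified, \cite[Theorem~2.4]{bougerol1993kalman} gives that the prediction error variance $P(n)$ converges in law to a stationary limit, uniformly in the initial condition, the mechanism being the contraction of the Riccati recursion in a suitable metric on the cone of nonnegative definite matrices (here, $[0,\infty)$) together with the ergodicity of the random coefficients. By \eqref{eq_15}, $M(n)$ is a fixed rational function of $P(n)$ and $h(n)$, namely $M(n)=P(n)\sigma^2_v/(\sigma^2_v+|h(n)|^2P(n))$ with $P(n)=\rho^2 M(n-1)+\sigma^2_u$; moreover $h(n)$ is independent of $P(n)$, so the pair $(P(n),h(n))$ converges jointly in law, and the continuous-mapping theorem then transfers convergence in law to $M(n)$. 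This establishes that $M(n)$ converges in distribution, so the steady-state law $f_M$ exists (its characterization as a fixed point of \eqref{eqn_f_M_III} being handled separately in \cref{lemma:VI}).

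The main obstacle is not any particular calculation but the bookkeeping required to align our scalar, otherwise time-invariant model with the more general (vector, possibly time-varying) hypotheses of \cite{bougerol1993kalman}, and in particular making the observability requirement precise when the channel coefficient may vanish: one has to use the standing assumption that $h(n)=0$ does not hold for all $n$ to conclude that the random observability Gramian is a.s.\ eventually nondegenerate, so that the contraction argument applies. The logarithmic-moment conditions, while they must be stated, are satisfied trivially in this setting.
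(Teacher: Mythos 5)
Your proposal is correct and follows essentially the same route as the paper: both reduce the claim to \cite[Theorem~2.4]{bougerol1993kalman} and verify its hypotheses (conditional Gaussianity/ergodicity of the i.i.d.\ coefficients, weak controllability via $\sigma^2_u>0$, weak observability via the exclusion of the identically-zero channel, and the logarithmic integrability conditions). The only cosmetic difference is that you apply the theorem to $P(n)$ and pass to $M(n)$ by the continuous-mapping theorem, whereas the paper invokes the theorem directly for the estimation error covariance; both are valid.
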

\begin{proof}
According to \cite[Theorem 2.4]{bougerol1993kalman}, three conditions are required for convergence of the estimation error covariance matrix of the Kalman filter with stochastic system parameters. Firstly, a hypothesis $\mathcal{H}$ (introduced in \cite[Section 2]{bougerol1993kalman}) should be satisfied. Secondly, it is required that the system is weakly observable and weakly controllable as defined in \cite{bougerol1993kalman}. Thirdly, certain (random) system parameters, specified later, should be integrable. 

For the first condition, it is mentioned in \cite[Section 2]{bougerol1993kalman} that a conditionally Gaussian system satisfies hypothesis $\mathcal{H}$. In our system, $u(n)$ and $v(n)$ are i.i.d. Gaussian random variables and also independent of $h(n)$. Therefore, our system is also conditionally Gaussian and satisfies the condition. 

For the second condition, we must show that our system is weakly controllable and weakly observable. Using the definition for weak controllability, we must show that
\begin{align}
\textnormal{Pr}\left(\sigma^2_u + \rho^2 \sigma^2_u + \rho^4 \sigma^2_u + \ldots + \rho^{2n} \sigma^2_u \neq 0\right) > 0, \label{eqn_con}\end{align}
which obviously holds as long as $\sigma^2_u > 0$. For weak controllability, we must show that 
\begin{align}   
\textnormal{Pr}\left(\rho^2 \gamma(1) + \rho^4 \gamma(2) + \ldots + \rho^{2n} \gamma(n) \neq 0\right) > 0.\label{eqn_obs} 
\end{align}
It is possible to show that \eqref{eqn_obs} holds for all channel distributions apart from the non-existent channel ($h(n) = 0$ for all $n$). Therefore, the second condition for convergence is also met.  

For the third condition to hold, we must have that the (random) variables $\log\log^+(\rho)$, $\log\log^+(\rho^{-1})$, $\log\log^+(\sigma^2_u)$ and $\log\log^+(h(1))$ are integrable (where $\log^+(x) = \max\left(\log(x),0\right)$), i.e. they have a well-defined expectation value (see e.g. \cite{williams1991probability} Chapter 13 for a definition of integrable random variables). Obviously, $\rho \neq 0$ and $\sigma^2_u > 0$ are deterministic variables. Therefore, they are integrable. $\log\log^+(h(1))$, is also integrable, given that $h(n)$ is defined as in Sec. \ref{Sec_II}. As a result, our system model satisfies all the prerequisites of Theorem 2.4 in \cite{bougerol1993kalman}. The consequence of the aforementioned theorem is that $M(n)$ converges in distribution (law) and therefore $f_M(M)$ exists. 
\end{proof}
\begin{Lemma}\label{lemma:VI}
The steady-state distribution of the random process $M(n)$ can be obtained from 
\begin{align}
& f_{M}(M) = \frac{1}{M^2} \int_{0}^{M_\textnormal{max}} f_{\gamma}\left(\frac{1}{M} - \frac{1}{\rho^2 m + \sigma^2_u}\right) f_{M} (m) \d m.\label{eqn_f_M_III_appendix}
\end{align}
\end{Lemma}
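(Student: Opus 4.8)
The plan is to pass to the limit $n\to\infty$ in the one-step recursion \eqref{eqn_f_M_II_a}, read as a statement about the Markov transition kernel of the process $M(n)$. Writing $M(n)=g(M(n-1),\gamma(n))$ with $g(m,\gamma)=\frac{\rho^2 m+\sigma^2_u}{1+\gamma(\rho^2 m+\sigma^2_u)}$, the i.i.d.\ channel assumption (already exploited before \eqref{eqn_f_M_II_a}, where it gives $\gamma(n)\ci M(n-1)$) makes $M(n)$ a Markov chain with transition kernel $\mathcal{T}$ acting on bounded measurable $\phi$ by $(\mathcal{T}\phi)(m)=E[\phi(g(m,\gamma))]$, the expectation being over a generic SNR $\gamma$ with density $f_\gamma$. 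Equation \eqref{eqn_f_M_II_a} is then exactly the density form of $\mu_n=\mu_{n-1}\mathcal{T}$, where $\mu_n$ denotes the law of $M(n)$.

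First I would observe that $\mathcal{T}$ is Feller: if $\phi$ is bounded and continuous, then for each fixed $\gamma\geqslant 0$ the map $m\mapsto\phi(g(m,\gamma))$ is continuous (as $g(m,\gamma)$ is continuous in $m$, the denominator being bounded below by $1$), and it is dominated by $\|\phi\|_\infty$, so dominated convergence yields continuity of $m\mapsto E[\phi(g(m,\gamma))]=(\mathcal{T}\phi)(m)$. Second, Lemma \ref{lemma:V} gives that $\mu_n$ converges weakly to a probability law $\mu$, which admits the density $f_M$. Third, I would let $n\to\infty$ in the identity $\int\phi\,\d\mu_n=\int(\mathcal{T}\phi)\,\d\mu_{n-1}$, valid for every bounded continuous $\phi$: the left side converges to $\int\phi\,\d\mu$, and since $\mathcal{T}\phi$ is itself bounded and continuous the right side converges to $\int(\mathcal{T}\phi)\,\d\mu=\int\phi\,\d(\mu\mathcal{T})$. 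Hence $\mu=\mu\mathcal{T}$, i.e.\ $\mu$ is invariant for the chain.

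It then remains to turn invariance into \eqref{eqn_f_M_III_appendix}, which I would do at the level of cumulative distribution functions so as to avoid any question of pointwise convergence of densities. From $M(n)\leqslant M\iff\gamma(n)\geqslant\frac{1}{M}-\frac{1}{\rho^2 M(n-1)+\sigma^2_u}$ together with $\gamma(n)\ci M(n-1)$, conditioning on $M(n-1)$ gives $F_{M(n)}(M)=\int_{0}^{M_\textnormal{max}}\overline{F}_\gamma\!\left(\frac{1}{M}-\frac{1}{\rho^2 m+\sigma^2_u}\right)f_{M(n-1)}(m)\,\d m$, where $\overline{F}_\gamma(t)=\textnormal{Pr}(\gamma\geqslant t)$ and $\overline{F}_\gamma(t)=1$ for $t\leqslant 0$. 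The test function $m\mapsto\overline{F}_\gamma\!\left(\frac{1}{M}-\frac{1}{\rho^2 m+\sigma^2_u}\right)$ is bounded and continuous in $m$, so by weak convergence of $\mu_{n-1}$ to $\mu$ the right side tends to the same expression with $f_M$ in place of $f_{M(n-1)}$, giving $F_M(M)=\int_{0}^{M_\textnormal{max}}\overline{F}_\gamma\!\left(\frac{1}{M}-\frac{1}{\rho^2 m+\sigma^2_u}\right)f_M(m)\,\d m$. Differentiating in $M$ under the integral sign and using $\frac{\partial}{\partial M}\overline{F}_\gamma\!\left(\frac{1}{M}-c\right)=\frac{1}{M^2}f_\gamma\!\left(\frac{1}{M}-c\right)$ then yields precisely \eqref{eqn_f_M_III_appendix}.

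The main obstacle is exactly this interchange of limit and integral: weak convergence of $\mu_n$ delivers convergence of $\int\phi\,\d\mu_n$ only for bounded continuous $\phi$, and the densities $f_{M(n)}$ need not converge pointwise, so one cannot naively ``let $n\to\infty$ inside \eqref{eqn_f_M_II_a}''; this is what the Feller property of $\mathcal{T}$ (equivalently, boundedness and continuity of the test function $m\mapsto\overline{F}_\gamma(\cdot)$) is needed for, and one could alternatively try to upgrade to $L^1$-convergence of densities via Scheff\'e's lemma, but the CDF route is cleaner. A secondary technical point is the differentiation under the integral sign, justified for (Lebesgue-)almost every $M$ by continuity of $\overline{F}_\gamma$ and its piecewise-$C^1$ structure (immediate for the Rayleigh channel of Sec.\ \ref{Sec_III_B}); for a fully general channel distribution the CDF identity should be regarded as the primitive form of the result, with \eqref{eqn_f_M_III_appendix} following whenever $\gamma$ has a density.
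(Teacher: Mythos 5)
Your proof is correct and follows essentially the same route as the paper: the paper identifies $M(n)$ as a Markov chain with the stated transition density, uses the convergence in law from Lemma \ref{lemma:V}, and invokes Theorem 2.3.5(ii) of \cite{hernandez2003markov} to conclude that the weak limit is invariant, whereas you supply a self-contained proof of exactly that step (the Feller property of the kernel plus passage to the limit in $\mu_n=\mu_{n-1}\mathcal{T}$) and then derive the density equation from the CDF identity. Your added care in differentiating the CDF identity rather than manipulating densities directly is a refinement of detail, not a different method.
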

\begin{proof}
We use the fact that $M(n)$ is a Markov process. This is due to the fact that $M(n)$ is determined by only $M(n-1)$ and $\gamma(n)$. We have also shown in Lemma \ref{lemma:V} that $M(n)$ converges in distribution. From the theory of Markov processes in \cite{hernandez2003markov} and utilizing the relationship between $f_{M(n-1)}(M)$ and $f_{M(n)}(M)$ in \eqref{eqn_f_M_II_a}, we can verify that $M(n)$ has a transition probability measure (function) equal to $\frac{1}{M^2} f_{\gamma}\left(\frac{1}{M} - \frac{1}{\rho^2 m + \sigma^2_u}\right)$. We can then refer to Theorem 2.3.5 (ii) in \cite{hernandez2003markov} and conclude that
\begin{align}
& f_{M}(M) = \frac{1}{M^2} \int_{0}^{M_\textnormal{max}} f_{\gamma}\left(\frac{1}{M} - \frac{1}{\rho^2 m + \sigma^2_u}\right) f_{M} (m) \d m,\label{eqn_f_M_III_appendix}
\end{align}
\end{proof}
as stated in \eqref{eqn_f_M_III}.
\begin{proof}[Proof of Theorem \ref{theorem:I}] We have shown in Lemma \ref{lemma:V} that $M(n)$ converges in law. We have also shown in Lemma \ref{lemma:IV} that the steady-state distribution, namely $f_M(M)$ can be obtained from \eqref{eqn_f_M_III_appendix}. The proof is then complete. 
\end{proof}
%%%%%%%%%%%%%%%%%%%%%%%%%%%%%%%%%%%%%%%%
%%%%%%%%%%%%%%%%%%%%%%%%%%%%%%%%%%%%%%%%
%%%%%%%%%%%%%%%%%%%%%%%%%%%%%%%%%%%%%%%%
%%%%%%%%%%%%%%%%%%%%%%%%%%%%%%%%%%%%%%%%
%%%%%%%%%%%%%%%%%%%%%%%%%%%%%%%%%%%%%%%%
%%%%%%%%%%%%%%%%%%%%%%%%%%%%%%%%%%%%%%%%
\section{Upper and lower bounds for $\kappa$}\label{app_2}
We begin to rewrite $f_M(M)$ in the following manner for simplicity
\begin{align} \label{eqn_main_app1}f_M(M) = \left\{
\begin{array}{ll}
 \frac{\kappa \lambda}{M^2} \textnormal{exp}(\frac{-\lambda}{M}) & M\leqslant \sigma^2_u\\
 g(M) & M> \sigma^2_u.
\end{array} \right.
\end{align}
We have that $f_M(M)$ is a pdf, therefore $\int_{0}^{M_\textnormal{max}} f_M(M) \d M =1$. As a result, we have that
\begin{align}
\nonumber 1&= \int_{0}^{M_\textnormal{max}} f_M(M)\d M = \int_{0}^{\sigma^2_u} \frac{\kappa \lambda}{M^2} \textnormal{exp}(\frac{-\lambda}{M}) + \int_{\sigma^2_u}^{M_\textnormal{max}} g(M) \d M\\
& \nonumber =  \kappa \textnormal{exp}(\frac{-\lambda}{M})\Bigg|_{0}^{\sigma^2_u} + \int_{\sigma^2_u}^{M_\textnormal{max}} g(M) \d M\\
& =  \kappa \textnormal{exp}(\frac{-\lambda}{\sigma^2_u}) +  \int_{\sigma^2_u}^{M_\textnormal{max}} g(M) \d M,
\end{align}
which gives 
\begin{align}
\label{eqn_integg}\int_{\sigma^2_u}^{M_\textnormal{max}} g(M) \d M = 1 -  \kappa \textnormal{exp}(\frac{-\lambda}{\sigma^2_u}).
\end{align}
Now we take 
\begin{align*}
\sigma^2_u < m < M_\textnormal{max}.
\end{align*}
Then we have
\begin{align*}
(\rho^2 + 1)\sigma^2_u < \rho^2 m + \sigma^2_u< \rho^2 M_\textnormal{max} + \sigma^2_u
\end{align*}
and 
\begin{align}
 \textnormal{exp}\left(\frac{\lambda}{\rho^2 M_\textnormal{max} + \sigma^2_u}\right) < \textnormal{exp}\left(\frac{\lambda}{\rho^2 m + \sigma^2_u}\right) < \textnormal{exp}\left(\frac{\lambda}{(\rho^2 + 1)\sigma^2_u}\right).
\end{align}

Now we have that
\begin{align}
&\int_{\sigma^2_u}^{M_\textnormal{max}}\textnormal{exp}\left(\frac{\lambda}{\rho^2 m + \sigma^2_u}\right) g(m) \d m >  \int_{\sigma^2_u}^{M_\textnormal{max}}\textnormal{exp}\left(\frac{\lambda}{\rho^2 M_\textnormal{max} + \sigma^2_u}\right) g(m)\d m\label{eqn_lb}\\
& \int_{\sigma^2_u}^{M_\textnormal{max}}\textnormal{exp}\left(\frac{\lambda}{\rho^2 m + \sigma^2_u}\right) g(m) \d m<  \int_{\sigma^2_u}^{M_\textnormal{max}}\textnormal{exp}\left(\frac{\lambda}{(\rho^2+1) \sigma^2_u}\right) g(m)\d m.\label{eqn_ub}
\end{align}
Next, if we use the definition of $f_M(M)$ for $M\leqslant \sigma^2_u$, we obtain the following
\begin{align}
f(M) &\nonumber = \frac{\lambda}{M^2}\textnormal{exp}(\frac{-\lambda}{M}) \int_{0}^{M_\textnormal{max}} \textnormal{exp}(\frac{\lambda}{\rho^2 m + \sigma^2_u}) f_M(m)\d m\\
& = \frac{\kappa \lambda}{M^2}\textnormal{exp}(\frac{-\lambda}{M})\label{eqn_leftside}\\
& = \frac{\lambda}{M^2}\textnormal{exp}(\frac{-\lambda}{M})  \left( \int_{0}^{\sigma^2_u} \textnormal{exp}(\frac{\lambda}{\rho^2 m + \sigma^2_u}) \left(\frac{\kappa \lambda}{m^2} \right) \textnormal{exp}(\frac{-\lambda}{m}) \d m\right.\\
 & \left.+ \int_{\sigma^2_u} ^{M_\textnormal{max}}\textnormal{exp}(\frac{\lambda}{\rho^2 m + \sigma^2_u}) g(m) \d m\right),\label{eqn_rightside}
\end{align}
from which by equating \eqref{eqn_leftside} and \eqref{eqn_rightside} and removing common terms on both sides, we deduce that
\begin{align}
\kappa \nonumber & = \kappa \int_{0}^{\sigma^2_u} \textnormal{exp}(\frac{\lambda}{\rho^2 m + \sigma^2_u}) \left(\frac{ \lambda}{m^2}\right) \textnormal{exp}(\frac{-\lambda}{m}) \d m\\
& + \int_{\sigma^2_u} ^{M_\textnormal{max}}\textnormal{exp}(\frac{\lambda}{\rho^2 m + \sigma^2_u}) g(m)\d m.
\end{align}
And then we obtain 
\begin{align}\label{eqn_kappa_app}
\kappa = \frac{\int_{\sigma^2_u} ^{M_\textnormal{max}}\textnormal{exp}(\frac{\lambda}{\rho^2 m + \sigma^2_u}) g(m)\d m}{1- \int_{0}^{\sigma^2_u} \textnormal{exp}(\frac{\lambda}{\rho^2 m + \sigma^2_u}) \left(\frac{ \lambda}{m^2}\right) \textnormal{exp}(\frac{-\lambda}{m}) \d m}.
\end{align}
Now by letting 
\begin{align}
a_{\kappa} = 1- \int_{0}^{\sigma^2_u} \textnormal{exp}(\frac{\lambda}{\rho^2 m + \sigma^2_u}) \left(\frac{ \lambda}{m^2}\right) \textnormal{exp}(\frac{-\lambda}{m}) \d m
\end{align}
and combining \eqref{eqn_lb} into \eqref{eqn_kappa_app} while using \eqref{eqn_integg}, we get 
\begin{align}
\kappa a_{\kappa} & \nonumber > \int_{\sigma^2_u}^{M_\textnormal{max}}\textnormal{exp}(\frac{\lambda}{\rho^2 M_\textnormal{max} + \sigma^2_u}) g(m)\d m\\
& > \textnormal{exp}(\frac{\lambda}{\rho^2 M_\textnormal{max} + \sigma^2_u})  \int_{\sigma^2_u}^{M_\textnormal{max}}g(m)\d m\\
& > \textnormal{exp}(\frac{\lambda}{\rho^2 M_\textnormal{max} + \sigma^2_u}) (1 -  \kappa \textnormal{exp}(\frac{-\lambda}{\sigma^2_u})),
\end{align}
which leads to 
\begin{align}
\kappa  > \frac{1}{(a_{\kappa} \textnormal{exp}(\frac{-\lambda}{\rho^2 M_\textnormal{max} + \sigma^2_u}) + \textnormal{exp}(\frac{-\lambda}{\sigma^2_u}))} .
\end{align}
So, we finally get
\begin{align}
\kappa_l =  \frac{1}{\left(a_{\kappa} \textnormal{exp}\left(\frac{-\lambda}{\rho^2 M_\textnormal{max} + \sigma^2_u}\right) + \textnormal{exp}(\frac{-\lambda}{\sigma^2_u})\right)}.
\end{align}
The same procedure also holds for $\kappa_u$ by integrating \eqref{eqn_ub} into \eqref{eqn_kappa} while using \eqref{eqn_integg}. We then get 
\begin{align}
\kappa_{u} & =\frac{1}{\left( a_{\kappa} \textnormal{exp}\left(\frac{-\lambda}{\sigma^2_u(1+\rho^2)}\right) + \textnormal{exp}(-\frac{\lambda}{\sigma^2_u})\right)}.
\end{align}
%%%%%%%%%%%%%%%%%%%%%%%%%%%%%%%%%%%%%%%%
%%%%%%%%%%%%%%%%%%%%%%%%%%%%%%%%%%%%%%%%
%%%%%%%%%%%%%%%%%%%%%%%%%%%%%%%%%%%%%%%%
%%%%%%%%%%%%%%%%%%%%%%%%%%%%%%%%%%%%%%%%
%%%%%%%%%%%%%%%%%%%%%%%%%%%%%%%%%%%%%%%%
%%%%%%%%%%%%%%%%%%%%%%%%%%%%%%%%%%%%%%%%
\section{High SNR Limits for $\kappa, \kappa_u, \kappa_l$}\label{app_3}
In this section we show that $\kappa, \kappa_u,\kappa_l$ converge to 1 in the high SNR regime, i.e. in the limit of $\lambda\rightarrow 0$. We also get the Taylor series expansion for $\kappa$ to accommodate for the high SNR linear approximation for the outage probability in Lemma \ref{lemma:IV}.    

We have 
\begin{align}
\kappa_{u} & =\frac{1}{\left( a_{\kappa} \textnormal{exp}\left(\frac{-\lambda}{\sigma^2_u(1+\rho^2)}\right) + \textnormal{exp}(-\frac{\lambda}{\sigma^2_u})\right) }.
\end{align}
For finite $\sigma^2_u$, the condition $\lambda \rightarrow 0 $ can be extended to $\lambda / \sigma^2_u \rightarrow 0$. We make this assumption to simplify the calculations. Assume $\lambda = \alpha \sigma^2_u$, then 
\begin{align}
\kappa_{u} & =\frac{1}{\left( a_{\kappa}(\alpha) \textnormal{exp}\left(\frac{-\alpha}{(1+\rho^2)}\right) + \textnormal{exp}(-\alpha)\right) }.
\end{align}
For finite $\sigma^2_u$, the condition $\lambda \rightarrow 0 $ will be equal to $\alpha \rightarrow 0$. We can then see that 
\begin{align}
\lim_{\alpha \rightarrow 0} \kappa_u(\alpha) = \frac{1}{ 1+ \lim_{\alpha \rightarrow 0} a_\kappa(\alpha)}.
\end{align}
In the following, we show that $\lim_{\alpha \rightarrow 0}a_\kappa(\alpha) = 0$. 
We have 
\begin{align}
a_{\kappa} (\alpha)& \nonumber = 1- \int_{0}^{\sigma^2_u} \textnormal{exp}(\frac{\lambda}{\rho^2 m + \sigma^2_u}) \left(\frac{ \lambda}{m^2}\right) \textnormal{exp}(\frac{-\lambda}{m}) \d m\\
& \label{eqn_akappa}= 1- \int_{0}^{1} \textnormal{exp}(\frac{\alpha}{1 + \rho^2 v }) \left(\frac{ \alpha}{v^2}\right) \textnormal{exp}(\frac{-\alpha}{v}) \d v,
\end{align}
where we made the change of variable $v = \frac{m}{\sigma^2_u}$. Now take $a_\kappa (\alpha) = 1 -I(\alpha)$, where  
\begin{align}
I(\alpha) & = \int_{0}^{1} \textnormal{exp}(\frac{\alpha}{1 + \rho^2 v }) \left(\frac{ \alpha}{v^2}\right) \textnormal{exp}(\frac{-\alpha}{v}) \d v\\
& =\nonumber  \textnormal{exp}(\frac{\alpha}{1 + \rho^2 v }) \textnormal{exp}(\frac{-\alpha}{v})\Bigg |_{0}^{1} \\
& - \int_{0}^{1} \textnormal{exp}(\frac{-\alpha}{v})  \textnormal{exp}(\frac{\alpha}{1 + \rho^2 v }) \left(\frac{ -\alpha \rho^2}{(1+\rho^2 v)^2}\right) \d v\\
& = \nonumber \textnormal{exp}(\frac{\alpha}{1 + \rho^2 })\textnormal{exp}(-\alpha)\\
& + \label{eqn_intg_alpha}\rho^2 \alpha  \int_{0}^{1} \textnormal{exp}(\frac{-\alpha}{v})  \textnormal{exp}(\frac{\alpha}{1 + \rho^2 v }) \frac{ 1}{(1+\rho^2 v)^2} \d v.
\end{align}
Now, because all of the functions $\textnormal{exp}(\frac{-\alpha}{v}), \textnormal{exp}(\frac{\alpha}{1 + \rho^2 v }), \frac{1}{(1+\rho^2 v)^2}$ are finite for $v \in [0, 1]$, then the integral term in \eqref{eqn_intg_alpha} is also finite. As a result, $\lim_{\alpha \rightarrow 0} I(\alpha) = 1$ and $\lim_{\alpha \rightarrow 0} \kappa_u(\alpha) = 1$. Similar results also hold for $\kappa_l^{\infty}, \kappa_l^{b}$. 

To get the limiting behavior for $\kappa$ when $\lambda \rightarrow 0$, we use the original definition for $\kappa$, i.e. 
\begin{align}
\kappa = \int_{0}^{M_\textnormal{max}}\textnormal{exp}(\frac{\lambda}{\rho^2 m + \sigma^2_u}) f_M(m) \d m
\end{align}
to obtain the aforementioned limit. As a prerequisite for lemmas \ref{lemma:III} and \ref{lemma:IV}, we also need the Taylor series expansion for $\kappa$ around the point $\lambda =0$, which is done in the following. 

We begin by first showing that the cdf of IEV, i.e. $F_M(M)$ approaches the step function when $\lambda \rightarrow 0$. We have that
\begin{align}
F_M(M) = 1 - P_\textnormal{out}(M) =  \kappa \textnormal{exp}(\frac{-\lambda}{M}).
\end{align}
Now for any $M>0$, we have 
\begin{align}
\lim_{\lambda \rightarrow 0} F_M(M) & = \lim_{\lambda \rightarrow 0} \kappa \textnormal{exp}(\frac{-\lambda}{M})  \nonumber\\
& = \lim_{\lambda \rightarrow 0} \kappa. \label{F_M_kappa}
\end{align}

Now we see that
\begin{align}
\lim_{\lambda \rightarrow 0} \kappa & = \lim_{\lambda \rightarrow 0} \int_{0}^{M_\textnormal{max}} \textnormal{exp}(\frac{\lambda}{\rho^2 m + \sigma^2_u}) f_M(m) \d m\nonumber \\
& = \lim_{\lambda \rightarrow 0} \int_{0}^{M_\textnormal{max}} \sum_{l=0}^{\infty} \frac{\lambda^l}{l!}\frac{1}{(\rho^2 m + \sigma^2_u)^l} f_M(m) \d m\nonumber \\
& = \lim_{\lambda \rightarrow 0}  \int_{0}^{M_\textnormal{max}} f_M(m) \d m \nonumber \\
& +\lim_{\lambda \rightarrow 0} \int_{0}^{M_\textnormal{max}} \sum_{l=1}^{\infty} \frac{\lambda^l}{l!}\frac{1}{(\rho^2 m + \sigma^2_u)^l} f_M(m) \d m\nonumber \\
& = \lim_{\lambda \rightarrow 0} 1 +  \lim_{\lambda \rightarrow 0} \int_{0}^{M_\textnormal{max}} \sum_{l=1}^{\infty} \frac{\lambda^l}{l!}\frac{1}{(\rho^2 m + \sigma^2_u)^l} f_M(m) \d m\nonumber \\
& = 1 +  \lim_{\lambda \rightarrow 0} \int_{0}^{M_\textnormal{max}} \sum_{l=1}^{\infty} \frac{\lambda^l}{l!}\frac{1}{(\rho^2 m + \sigma^2_u)^l} f_M(m) \d m, \nonumber \\
\end{align}
but for $m \geqslant 0$, we have that $\frac{1}{\rho^2 m + \sigma^2_u} \leqslant \frac{1}{\sigma^2_u}$. As a result we get
\begin{align}
& \lim_{\lambda \rightarrow 0} \int_{0}^{M_\textnormal{max}} \sum_{l=1}^{\infty} \frac{\lambda^l}{l!}\frac{1}{(\rho^2 m + \sigma^2_u)^l} f_M(m) \d m \leqslant  \nonumber \\
& \lim_{\lambda \rightarrow 0} \sum_{l=1}^{\infty} \frac{\lambda^l}{l!}\frac{1}{(\sigma^2_u)^l}\int_{0}^{M_\textnormal{max}}f_M(m) \d m, 
\end{align}
and thus
\begin{align}
& \lim_{\lambda \rightarrow 0} \int_{0}^{M_\textnormal{max}} \sum_{l=1}^{\infty} \frac{\lambda^l}{l!}\frac{1}{(\rho^2 m + \sigma^2_u)^l} f_M(m) \d m \leqslant  \nonumber \\
& \lim_{\lambda \rightarrow 0} \sum_{l=1}^{\infty} \frac{\lambda^l}{l!}\frac{1}{(\sigma^2_u)^l},
\end{align}
but 
\begin{align}
\lim_{\lambda \rightarrow 0} \sum_{l=1}^{\infty} \frac{\lambda^l}{l!}\frac{1}{(\sigma^2_u)^l} = \lim_{\lambda \rightarrow 0} (\textnormal{exp}(\frac{\lambda}{\sigma^2_u}) -1)= 0,
\end{align}
therefore 
\begin{align}
 \lim_{\lambda \rightarrow 0} \int_{0}^{M_\textnormal{max}} \sum_{l=1}^{\infty} \frac{\lambda^l}{l!}\frac{1}{(\rho^2 m + \sigma^2_u)^l} f_M(m) \d m = 0,
\end{align}
and finally, $ \lim_{\lambda \rightarrow 0} \kappa =1$ as claimed before. In addition and from \eqref{F_M_kappa}, this result shows that $F_M(M)$ approaches the unit step function when $\lambda \rightarrow 0$ and as a result $f_M(M)$ approaches the Dirac's delta function when $\lambda \rightarrow 0$. With this assumption we have 
\begin{align}
\kappa & = \int_{0}^{M_\textnormal{max}} \textnormal{exp}\left(\frac{\lambda}{\rho^2 m +\sigma^2_u}\right)f_M(m) \d m\nonumber\\
& =  \int_{0}^{M_\textnormal{max}}  \sum_{l=0}^{\infty} \frac{\lambda^l}{l!}\frac{1}{(\rho^2 m + \sigma^2_u)^l} f_M(m)\d m\nonumber \\
& = \int_{0}^{M_\textnormal{max}} \sum_{l=0}^{\infty} \frac{\lambda^l}{l!}\frac{1}{(\rho^2 m + \sigma^2_u)^{l}} f_M(m)\d m\nonumber\\
& = \sum_{l=0}^{\infty} \frac{\lambda^l}{l!}\int_{0}^{M_\textnormal{max}} \frac{1}{(\rho^2 m + \sigma^2_u)^{l}} f_M(m)\d m\nonumber\\
& = \sum_{l=0}^{\infty} \frac{\lambda^l}{l!}\int_{0}^{M_\textnormal{max}} \frac{1}{(\rho^2 m + \sigma^2_u)^{l}} \delta(m)\d m\nonumber\\
& = \sum_{l=0}^{\infty} \frac{\lambda^l}{l!} \frac{1}{(\sigma^2_u)^{l}},
\end{align}
which is the Taylor series expansion for $\kappa$ around $\lambda = 0$ to be used in lemmas \ref{lemma:III} and \ref{lemma:IV}. 

\bibliographystyle{ieeetran}
\bibliography{references}
\end{document}